\algrenewcommand\textproc{}
\algnewcommand\algorithmicforeach{\textbf{for each}}
\tikzset{node/.style={circle,draw,inner sep=0pt, minimum size=2em}}
\tikzset{node/.style={circle,draw,inner sep=0pt, minimum size=2em}}
\definecolor{olive}{rgb}{0.3, 0.4, .1}
\definecolor{pinegreen}{cmyk}{0.92,0,0.59,0.25}
\newtheorem{theorem}{Theorem}[section]
\newtheorem{corollary}{Corollary}[section]
\newtheorem{lemma}{Lemma}[section]
\author[1]{Alejandro Ranchal-Pedrosa}
\author[1,2]{Vincent Gramoli}
\affil[1]{University of Sydney, Sydney, Australia}
\affil[2]{Data61-CSIRO, Sydney, Australia}
\begin{document}
\title{Platypus: a Partially Synchronous Offchain Protocol for Blockchains}

\newcommand{\boxedtext}[1]{\fbox{\scriptsize\bfseries\textsf{#1}}}

\newcommand{\greenremark}[2]{
   \textcolor{pinegreen}{\boxedtext{#1}
      {\small$\blacktriangleright$\emph{\textsl{#2}}$\blacktriangleleft$}
}}
\newcommand{\myremark}[2]{
   \textcolor{blue}{\boxedtext{#1}
      {\small$\blacktriangleright$\emph{\textsl{#2}}$\blacktriangleleft$}
}}
\newcommand{\NewRemark}[2]{
   \textcolor{blue}{
      {\small$\blacktriangleright$\emph{\textsl{#2}}$\blacktriangleleft$}
}}

\newcommand\ARP[1]{\myremark{ARP}{#1}}
\newcommand\VG[1]{\myremark{VG}{#1}}
\newcommand\NEW[1]{\NewRemark{NEW}{#1}}
\newcommand\UPDATE[1]{\greenremark{UPDATE?}{#1}}

\maketitle

\begin{abstract}
Offchain protocols aim at bypassing the scalability and privacy limitations of classic blockchains
by allowing a subset of participants to execute multiple transactions outside the blockchain. 
While existing solutions like payment networks and factories depend on a complex routing protocol, 
other solutions simply require participants to build a \emph{childchain}, 
a secondary blockchain where their transactions are privately executed.
Unfortunately, all childchain solutions assume either synchrony or a trusted execution environment.

In this paper, we present Platypus a childchain that requires neither synchrony nor a trusted execution environment.
Relieving the need for a trusted execution environment allows Platypus to ensure privacy without trusting a central 
authority, like Intel, that manufactures dedicated hardware chipset, like SGX. 
Relieving the need for synchrony means that no attacker can steal coins by leveraging clock drifts or message delays to lure timelocks.
In order to prove our algorithm correct, we formalize the chilchain problem as a Byzantine variant of the classic Atomic Commit problem, where closing a childchain is equivalent to committing the whole set of payments previously recorded on the childchain ``atomically'' on the main chain. Platypus is resilience optimal and we explain how to generalize it to 
crosschain payments. 
\end{abstract}

\section{Introduction}
\label{sec:introduction}
One of the most important challenges of blockchains is scalability.
In fact, most blockchains consume more resources without offering better performance as the number of participants increases. 
Although some research results demonstrated that blockchain performance can scale with the number of participants~\cite{CNG18}, these rare solutions do not have other appealing properties, like privacy, built in.
As a result, blockchain extensions that offer scalability and privacy have been put forward,
 in what is known as \emph{Offchain protocols}. Examples of these protocols are state and payment \textit{channels}~\cite{poon2016bitcoin}
in which two parties can perform several offchain payments with one another; \textit{channel networks}~\cite{poon2016bitcoin} 
that allow users to relay payments in a network of channels; \textit{channel factories}~\cite{2019scalable}
that open multiple channels in one transaction, saving storage and fees; and \textit{childchains}~\cite{poon2017plasma}
which are secondary blockchains pegged to the existing, so called ``parent'', blockchain.

By relying on offchain computation, all these protocols avoid communicating and/or storing some information directly in the blockchain---hence bypassing the performance bottleneck of the blockchain but also limiting transparency of selected transactions to ensure privacy.
Whereas channels, channel networks and channel factories offer private and fast payments, they can only perform payments if users have an existing route of channels with one another. As a result their scalability and privacy are actually subject to proper handling of the network topology and vulnerable to routing attacks~\cite{2019difficulty}.

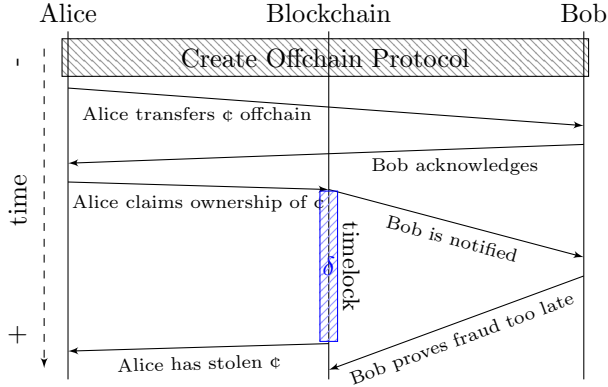
\begin{figure}[t]
  \centering
 \begin{tikzpicture}[node distance=2cm,auto,>=stealth']
    \node[] (bob) {Bob};
    \node[left = of bob] (bck) {Blockchain};
    \node[left = of bck] (alice) {Alice};
    \node[left = of bck,xshift=-2em,yshift=-1em] (time1) {};
    \node[below of=bck,yshift=4em,xshift=-0.13em,rectangle,minimum width=7cm,minimum height=0.5cm,pattern=north west lines, pattern color=black!40, draw] (creation) {Create Offchain Protocol};
    \node[below of=bob, node distance=5cm] (bob_ground) {};
    \node[below of=alice, node distance=5cm] (alice_ground) {};
    \node[below of=bck, node distance=5cm] (bck_ground) {};
    \node[below of=time1,node distance=4.5cm] (time2) {};
    %
    \draw (alice) -- (alice_ground);
    \draw[dashed,-latex'] (time1) -- node [midway,rotate=90,yshift=1em,xshift=-1em]{time} node [near start,rotate=90,yshift=1em,xshift=2em]{-} node [near end,rotate=90,yshift=1em,xshift=-2.5em]{+} (time2);    
    \draw (bob) -- (bob_ground);
    \draw (bck) -- (bck_ground);
    \draw[-latex'] ($(alice)!0.20!(alice_ground)$) -- node[below,near start]{\scriptsize Alice transfers \textcent~offchain} ($(bob)!0.30!(bob_ground)$);
    \draw[-latex'] ($(bob)!0.35!(bob_ground)$) -- node[below,midway, near start]{\scriptsize Bob acknowledges} ($(alice)!0.40!(alice_ground)$);
    \draw[-latex'] ($(alice)!0.45!(alice_ground)$) -- node[below,midway]{\scriptsize Alice claims ownership of \textcent} ($(bck)!0.47!(bck_ground)$);
    \draw[-latex'] ($(bck)!0.47!(bck_ground)$) -- node[below,sloped, midway]{\scriptsize Bob is notified} ($(bob)!0.65!(bob_ground)$);
    \draw[-latex'] ($(bob)!0.7!(bob_ground)$) -- node[below,sloped,midway,align=center]{\scriptsize Bob proves fraud too late} ($(bck)!0.95!(bck_ground)$);
    \draw[-latex'] ($(bck)!0.88!(bck_ground)$) -- node[below,sloped,midway]{\scriptsize Alice has stolen \textcent} ($(alice)!0.9!(alice_ground)$);
    \node[below of=bck,rectangle,yshift=-3.9em,minimum height=2cm,pattern=north east lines, pattern color=blue!40, draw=blue] (period) {};
    \node[right of=period,xshift=-2cm] (dt) {\color{blue}$\delta$};
    \node[right of=period,xshift=-1.7cm,rotate=-90] (dt) {\small timelock};
\end{tikzpicture}
\caption{Alice can steal Bob's coin if Bob messages are delayed such that Bob's reply takes longer than the timelock $\delta$. 
}
\label{fig:01}
\vspace{-2em}
\end{figure}

There is, therefore, great interest in designing proper childchain protocols that allow blockchains to host the creation and destruction of other smaller blockchains that depend on 
them. 
Unfortunately, as far as we know all childchains~\cite{back2014enabling,gazi2019proof} use timelocks that only work under the assumption that the communication is \emph{synchronous}, in that every message gets delivered in less than a known bounded amount of time~\cite{dwork1988consensus}.
This assumption is easily violated in large networks like the internet, due to either natural disasters or human misconfiguration of BGP tables for example. 
But more dramatically, assuming synchrony exposes childchains to various attacks, like Denial-of-Service or Man-in-the-Middle, that are common practice to double spend~\cite{EGJ18}.

To illustrate the problem of chilchains,
consider an execution using timelocks illustrated in Figure~\ref{fig:01} in which time increases from top to bottom. 
First, Alice transfers \textcent{} coins to Bob offchain before Bob acknowledges the transfer. 
Bob can then take actions in response to this transfer thinking, wrongly, that he will have sufficient time to prove the fraud if Alice tries to claim back the coins.
Let us consider that Alice is Byzantine (or malicious) and claims back the ownership of the coins, which triggers a timelock, a safe guard delay during which the coins are locked to give an opportunity to other participants to prove fraudulent activity before the coins are transferred back. 
As part of the protocol, Bob gets notified but due to an unforeseen delay, does not manage to prove the fraud before the end of the timelock. Its \textcent{} coins are thus stolen. 


In this paper, we propose \textit{Platypus}, the first offchain protocol for childchains that does not assume synchrony.
To this end, we formalize offchain protocols as a Byzantine fault tolerant atomic commit of multiple transfers. 
Platypus exploits \emph{partial synchrony}~\cite{dwork1988consensus}---also called eventual synchrony---where messages take  unknown amounts of time to be delivered.
We prove the correctness and resilience optimality of Platypus and discuss applications to crosschain payments.
Finally, we show that the time, message and communication complexities of Platypus is lower or comparable to consensus algorithms.

The rest of this document is structured as follows: Section~\ref{sec:bac} provides some background and preliminary definitions, Section~\ref{sec:mod} introduces our model and Section~\ref{sec:pro} presents the Platypus protocol. We prove Platypus correct in Section~\ref{sec:cor}. We analyse the complexity of Platypus in Section~\ref{sec:analysis}. Section~\ref{sec:dis} discuss applications of Platypus to crosschain payments. Section~\ref{sec:rel}  presents the related work and we conclude in Section~\ref{sec:conc}.

\section{Preliminaries}
\label{sec:bac}
In this section we discuss the background and introduce important notations.
\begin{itemize}
  
  \item \textit{Consensus protocol} We require a relaxed validity property for consensus, in which Byzantine processes proposing a valid value can be taken. consensus protocol between $P$ processes is any protocol that satisfies the following properties:
\begin{itemize}
\item Termination. Every correct process $p$ eventually decides on a value.
\item Validity. The value decided by a correct process verifies a predefined predicate valid.
\item Agreement. No two different processes decide on a different value.\\
\end{itemize}
\item \textit{Blockchain.} Inspired by~\cite{gazi2019proof}, we refer to a blockchain $\Omega= \langle b_i \rangle$ as a distributed ledger that builds upon a consensus protocol in order to add \textit{blocks} $b_i$. We denote $\Omega[i]$ as the $i^{th}$ block of $\Omega$, with $\Omega[-i]$ being the $i^{th}$ latest block of $\Omega$. Transactions are added to a block that is then written in $\Omega$. The set of processes $V$ that agree on the transactions to be written in the next block are the \textit{validators}. As we assume the presence of a consensus protocol, we consider that the blockchain cannot fork due to a disagreement, we call it \emph{unforkable}.
\item \textit{Transactions.} Similar to the model of~\cite{gurcan2018cancellation}, a transaction is a tuple $tx=\langle I, O\rangle$ where $I$ is a list of inputs and $O$ a list of outputs. Outputs are stored in an Unspent Transaction Output (UTXO) pool until a transaction that consumes it as one of its inputs gets written in $\Omega$. We model the outputs as $o_i=\langle s_i,$\textcent$_{o_i}\rangle $ where the set $s_i=\{(p_i,conds_{p_i})\}$ defines the conditions $conds_{p_i}$ for the process $p_i$ to spend the coin \textcent$_{o_i}$ (\textcent$_{o_i} \geq 0$). 
%
In order to spend a coin, the associated conditions $conds_{p_i}$ must be fulfilled
so that only one process, among multiple candidate ones, can spend this coin.
\item \textit{Ownership.} We say that process $p_i$ owns coin \textcent$_i$ if there exists a list of conditions $conds_{p_i}$ such that $p_i$ can spend \textcent$_i$. 
As such, let $\mathds{C}$ be the set of coins, $\mathcal{T}$ the set of discrete timeslots (such as blockheight), and $P$ the set of processes, then ownership is a function $\varphi: \mathds{C}\times\mathcal{T}\rightarrow P$ that takes a coin and returns its owner 
at a particular time. 
  \item \textit{Transferring coins.} A transaction may transfer one or more coins. We refer to $p_i$ transferring a coin \textcent$_i$ to $p_j$ if a process $p_i$ spends it to $p_j$. We can define a transfer of a coin as a change of ownership. That is, let $a,\,b\in P$ and let \textcent$_i\in \mathds{C},\,t_j\in\mathcal{T}$, such that $\varphi(\text{\textcent}_i,t_j)=a$, then the transfer relation $\ms{TR}$ to b is such that {\small $a\, \ms{TR}_{t_{j+1},\text{\textcent}_i}\, b \iff \varphi(\text{\textcent}_i,t_{j+1})=b$}. 
 Notice we can define the transitive closure of the transfer operation as follows:\\
  \begin{equation}
    \hspace{-0.5em}
    \small
    \ms{TR}^+=  \left \{\begin{aligned}(a,b)\in P^2\,:\; &\exists \text{\textcent}_i\; s.t. \; \varphi(\text{\textcent}_i,t_j)=a \textbf{ and }\varphi(\text{\textcent}_i,t_k)=b,\\& \textbf{ for some } t_j,t_k\in\mathcal{T},\; j<k. \end{aligned}\right \}
  \end{equation}

\end{itemize}


\section{Model}
\label{sec:mod}

We define in this section some assumptions and concepts for our model:
\begin{itemize}
\item \textit{Partial synchrony and failures.}  Our model relies on a partially synchronous network~\cite{dwork1988consensus}, i.e. a network in which each message has an unknown communication bound. In other words, there exists an unknown time where the network stabilizes and after which every messages is delivered in a bounded amount of time. We also assume that strictly less than $n/3$ processes participating in an offchain protocol are \emph{Byzantine} in that they may fail arbitrarily and that other processes are \emph{correct} (as we detail in the adversary model below).

\item \textit{Accounts.} We define an account $a$ as an instance of only one process $p_i$, $\rho(a)=p_i$, where $\rho(a)$ is a function that returns the process that controls account $a$. An account belongs to a particular blockchain, one account is controlled by only one process, but one process can have multiple accounts, either in the same or in different blockchains. 
  \item \textit{Threshold signatures.} Our model requires accounts to authenticate with a cryptographic primitive enabling non-interactive aggregation, such as those of~\cite{gazi2019proof,boneh2018compact}. For simplicity and without loss of generality we assume that accounts are not reusable. In particular, the same coins 
  should not go back to the same process in the same account, to prevent a variant of the ABA problem (see Section~\ref{sec:dis}). In the remainder, we abuse the term process as an account that the process owns, unless stated otherwise.

  \item \textit{Minimal transfers.} Given a sequence $seq=\{u_{j}\, \ms{TR}_{t_{j+1},\text{\textcent}_i}\, u_{j+1}\}_{j=c}^{d-1}$ of transfers over some timerange $[t_c,t_d]$,  between creation time $t_c$ and destruction time $t_d$, for coin $\text{\textcent}_i$, we refer to the minimal transfer as the single transfer $u_c\,\ms{TR}_{\text{\textcent}_i} u_d$, which is always an element of the transitive closure. For a set of operations defined over all coins within a timerange $[t_c,t_d]$, we denote the minimal transfer set $\ms{TR}^-$ as the set of all minimal transfers, which is at least a set of idempotent transfers of the form $a\, \ms{TR}\, a$.  
  \item \textit{Offchain problem.} 
%
%
Given a blockchain $\Omega$ of $P$ processes, 
the offchain protocol consists of executing a sequence $seq$  of transfers ``off chain''.
First processes $Q \varsubsetneq P$ must create an offchain protocol $\Gamma$ by writing a transaction in the original chain $\Omega$---effectively depositing funds from $\Omega$ into $\Gamma$.
Then they transfer coins ``off chain'' among themselves using $\Gamma$. 
Finally they can destroy this protocol $\Gamma$.
To this end, the offchain protocol consists of at least two main procedures, \emph{creation} and \emph{bulk close}. (We will explain later how the participation in $\Gamma$ is made dynamic using splice in and splice out procedures to accept new participants and for existing participants to leave $\Gamma$, respectively.)
After a series of transfers in $\Gamma$,  
processes can propose to bulk close it by \emph{proposing COMMIT}.
Processes \emph{decide to COMMIT} in that they effectively agree to accept these transfers and to close and destroy $\Gamma$ or \emph{decide to ABORT} in that they 
disagree with the transfers and refuse to close $\Gamma$.
Hence, a protocol solving the \emph{Offchain problem} must satisfy the following properties:
\begin{itemize}
    \item Termination: every correct process decides COMMIT or ABORT on some sequence of transfers $seq$ for which some process proposed COMMIT.
    \item Agreement: no correct process decides COMMIT on  two different sequences $seq$ and $seq'$.
    \item ABORT-Validity: if a correct process proposes ABORT for a sequence $seq$, then all correct processes decide ABORT for this sequence $seq$.
    \item COMMIT-Validity: if no correct process proposes ABORT for sequence $seq$ for which some process proposed COMMIT, then all correct processes decide COMMIT for sequence $seq$.
 \end{itemize}


Notice that, in our definition, aborting is implicit and proposing ABORT is not an input of our algorithm as we will see in Algorithm~\ref{alg:val}.
In particular, COMMIT-Validity can be ensured by requiring a process to provide a valid Proof-of-Fraud (PoF) when proposing to abort, the invalidity  of the PoF allows correct processes to ignore the ABORT proposal and its validity guarantees that all correct will observe this PoF. 
%
Finally note that our termination does not imply that the offchain protocol gets closed. Instead, it means that all correct processes decide either COMMIT and closes  the protocol or ABORT and not closing the protocol. 
This is not a problem since, as we will explain in Algorithm~\ref{alg:spou}, any correct process can cash out the coins that it knows it owns at any moment.

    
    


    

In order to achieve privacy, we need another property stating that some decisions of the offchain protocol do not have to be written in the blockchain:
  \begin{itemize}
  \item COMMIT-Privacy/Lightness: If correct processes decide COMMIT on a sequence $seq$ 
    of transfer operations made in $\Gamma$ between $t_c$ and $t_d$, then $\forall p \in P\backslash Q,\, p$ 
    only learns/stores  $\ms{TR}^-$, the minimal transfer set of $seq$.
  \end{itemize}
\item \textit{Childchain.} A \emph{childchain} $\Psi$ is a particular class of offchain protocol
in that it is a blockchain $\Psi$ that is created by another blockchain $\Omega$, known as its \textit{parentchain}, and that implements an Offchain protocol $\Gamma$.

\item \emph{Adversary model.}
\label{sec:adv}
  We consider an adversary $F$ such that:
  \begin{itemize}
    \item $F$ can control the network to read or delay messages, but not to drop them.  
    \item $F$ can take full control and corrupt a coalition of $f$ processes, learning its entire state (stored messages, signatures, etc.). It takes control of receiving and sending all their messages. This adversary can also guess in advance the estimate value of any correct process in any round. Furthermore, it delivers the messages from correct nodes instantly, and its messages are delivered instantly by any correct nodes.
    \item $F$ cannot forge signatures of processes outside the coalition $f$.
    \item  We define $t_0$
    and $t_1$
    two thresholds for Byzantine behavior. That is, the coalition must be such that $f \leq t_0$ and $f \leq t_1$.
    \end{itemize}
    \end{itemize}

\section{Secure Childchains Without Synchrony}
\label{sec:pro}

In this section we present Platypus, a novel childchain protocol that solves the offchain problem without assuming synchrony. 
Platypus consists of both an offchain protocol and a childchain that are denoted respectively $\Gamma$ and $\Psi$ in the remainder of the paper. 
Given a parentchain $\Omega$, processes can use the protocol $\Gamma$ by depositing funds from $\Omega$ to $\Psi$, that effectively creates the childchain. Then transfers can be done directly on $\Psi$  ``off chain'' before the bulk close happen.  

The parentchain $\Omega$  and childchain $\Psi$ have a set $P_\Omega$ of $|P_{\Omega}|=n_p$ users and $P_\Psi$ of $|P_\Psi|=m_p$ users, respectively, with a set $V_\Omega\subseteq P_\Omega$ of $|V_\Omega|=n_v$ validators and a set  $V_\Psi\subseteq P_\Psi\subseteq P_\Omega$ of $|V_\Psi|=m_v$ validators, respectively. 
Note that $m_v$ is the number of all processes joining the Platypus protocol.
As mentioned before, however, we assume that at most  $t_1 = \lceil m_v/3 \rceil -1$ among them are Byzantine.
%
Although we do not provide an implementation of the blockchain $\Psi$ we assume that $\Psi$ is secure (i.e. it uses deterministic consensus to not fork): a blockchain assuming partial synchrony and $\lceil m_v/3 \rceil -1$ Byzantine processes among $m_v$ processes like Red Belly~\cite{CNG18,crain2018dbft} can be used here.


\subsection{Overview}
 $\Gamma$ is depicted in three main procedures: a creation (Alg.~\ref{alg:plcr}), a bulk close  (Alg.~\ref{alg:plcl}) and an abort (Alg.~\ref{alg:val}). (Splice in and splice out procedures are deferred to Section~\ref{sec:dis}).
Processes can ABORT or COMMIT sequences of transfers done in $\Psi$. In particular, a process proposes ABORT by creating an abort transaction  (and sharing it) in line~\ref{line:vPoF} of Alg.~\ref{alg:val} and proposes a COMMIT at line~\ref{line:cor} of Alg.~\ref{alg:plcl}.
A process decides COMMIT 
at line~\ref{line:par3} (Alg.~\ref{alg:plcl}) only after $m_0$ processes propose COMMIT  and decides ABORT at line~\ref{line:par2} (Alg.~\ref{alg:plcl}) only when there exists a valid abort transaction.

\subsection{Creating a Platypus Chain}
  Users can create a Platypus chain by publishing a transaction on $\Omega$. After that transaction is finalized, the funds referred to in this transaction are locked and ready to be used by the $\Gamma$. 
\label{subsec:plcr}
In general, a Platypus creation transaction ($tx_{plcr}$) is a transaction that:
\begin{itemize}
  \item Has a new Platypus id ($\ms{plid}$) that uniquely identifies it.
  \item Specifies a consensus protocol for the Platypus Blockchain to decide on a new block. W.l.o.g., we assume DBFT~\cite{crain2018dbft} to be the default protocol.
  \item Specifies a number $m_0>f$ of validators required to create $\Psi$. For simplicity and to match with the optimal result (see Theorem~\ref{the:imp}), we choose $m_0=\lfloor 2m_v/3\rfloor +1$. 
    
  \item Defines a new function $\lit{abort(...)}$ that specifies when a user can decide ABORT on the protocol (such as a Platypus bulk close transaction being aborted).
    
  \item Specifies a set of processes and their balances that go in the Platypus Blockchain through this transaction.
    
  \item Once written in $\Omega$, the funds can only be spent in $\Psi$.
  \end{itemize}
  Algorithm~\ref{alg:plcr} shows the protocol to create a Platypus chain. The call to $\lit{num\_signers}(tx)$ returns the amount of signers of $tx$, while the call to $\lit{verify(tx,\{msg\})}$ verifies the validity of the transaction and signed messages. 
  We define two main interactions of the Platypus protocol with both the childchain and the parentchain: sending transactions and reading transactions. The Platypus protocol $\Gamma$ sends transactions to $\Omega$ or $\Psi$ by invoking $\lit{send(\{\Omega,\Psi\},tx)}$ and $\lit{acsend(\{\Omega,\Psi\},tx)}$. In the former, the function returns once the transaction is written in the corresponding blockchain, or a transaction that spent the same funds has been written (meaning this transaction became invalid), while the latter returns ABORT or COMMIT and the respectively written transaction in a response message. This response is received by all validators as it is a result of the Platypus Blockchain. Reading transactions is performed by the call to $\lit{is\_written(\{\Omega,\Psi\},tx)}$ that returns True or False depending on if the transaction was written or not in the Blockchain.
Each of the messages are signed, to prevent Byzantine nodes from adding third parties without their agreement. 



  \begin{algorithm}[H]
    \caption{Platypus creation procedure}
    \label{alg:plcr}
    \begin{algorithmic}[1]
      \small 
    \Statex $\vartriangleright$ State of the algorithm
    \Statex $\Omega$, the parentchain
    \Statex $\Gamma$, the Platypus protocol
    \Statex $P_{\Omega}$, the set of processes in the parentchain
    \Statex $P_{\Psi}\gets \bot$, the set of processes in the Platypus chain
    \Statex $V_{\Psi}\gets \bot$, the set of validators in the Platypus chain
    \Statex $m_v$, the amount of validators required in $\Psi$
    \Statex $\mathds{C}_i$, coins that belong to process $p_i$
    \Statex $job_i$, boolean defining if $p_i$ is $\lit{VALIDATOR}$ or just $\lit{USER}$
    \Statex $\ms{plid}$, the Platypus chain identifier
    \Statex $\ms{msg}_{i}=\langle \mathds{C}_i,\ms{plid},job_i\,\sigma_i\rangle$, signed message to join.
    \Statex $\sigma_i$, signature of $msg_i$ by $p_i$
    \Statex $tx_{plcr}\gets \bot$, the Platypus creation transaction
    \Statex \rule{0.4\textwidth}{0.4pt}
    \Statex $\vartriangleright$ PHASE 1: process $p_0$ initiates request
    \State $\ms{msg}_{0}\gets \lit{sign(\langle \mathds{C}_0,\ms{plid},job_0\rangle)}$\label{line:job1}
    \State $\lit{multicast}(\ms{msg}_{0})$ to $P_{\Omega}$\label{line:job2}
    \Statex \rule{0.4\textwidth}{0.4pt}
    \State  $\vartriangleright$ PHASE 2: Rest of processes who want to join reply
    \State \textbf{when} $\ms{msg}_0$ is received from $p_0$
    \State $\ms{msg}_{i}\gets \lit{sign(\langle \mathds{C}_i,\ms{plid},job_i\rangle)}$
    \State $\lit{multicast}(\ms{msg}_i)$ to $P_{\Omega}$
    \Statex \rule{0.4\textwidth}{0.4pt}
    \Statex $\vartriangleright$ PHASE 3: Validator $p_i\in V_{\Psi}$ gathers enough validators
    \State \textbf{when} $\ms{msg}_j$ is received from $p_j$ \textbf{and} $p_j\not\in P_\Psi$
    \State $\{P_\Psi,\mathds{C}_{P_\Psi}\}\gets \{P_\Psi\cup\{p_j\},\;\mathds{C}_{P_\Psi}\cup \ms{msg}_j.\mathds{C}_j\}$
    \SmallIf{$\ms{msg}_j.job_j=\lit{VALIDATOR} $ \textbf{and} $p_j\not\in V_\Psi$}{}\label{line:coi}
    \State $\{V_\Psi,\mathds{C}_{V_\Psi}\}\gets \{V_\Psi\cup\{p_j\},\;\mathds{C}_{V_\Psi}\cup \ms{msg}_j.\mathds{C}_j\}$
    \SmallIf{$|V_\Psi|= m_v$}{}\label{line:plcr1}\Comment{Enough validators to start transaction}
    \State $tx_{plcr} \; \gets \lit{createPlatypusTx(}\mathds{C}_{P_\Psi},\mathds{C}_{V_\Psi},\ms{plid}\lit{)}$
    \State $tx_{plcr} \; \gets \lit{sign_i(}tx_{plcr}\lit{)}$
    \State $\lit{multicast}( tx_{plcr},\{\ms{msg}_k\}_{p_k\in P_\Psi} )$ to $V_{\Psi}$ \label{line:comcop}
    \EndSmallIf
    \EndSmallIf
    \rule{0.4\textwidth}{0.4pt}
    \Statex $\vartriangleright$ PHASE 4: $p_i\in V_{\Psi}$ signs and broadcasts until it gets enough signatures
    \State \textbf{when} $(tx_{plcr},\{\ms{msg}_j\}_{p_j\in P_\Psi} )$  
    \label{line:plcrcond}
is received \textbf{and} 
    \textbf{not} $\lit{is\_written}(\Omega,tx_{plcr},\ms{plid})$ \Comment{if $tx_{plcr}$ with $\ms{plid}$ not written in $\Omega$}
    \SmallIf{$\lit{verify}(tx_{plcr},\{\ms{msg}_j\})$}{$tx_{plcr} \; \gets \lit{sign_i(}tx_{plcr}\lit{)}$} \label{line:plcrver} \label{line:plcrval}
    \SmallIf{$\lit{num\_signers(}tx_{plcr}\lit{)} < \lfloor 2m_v/3\rfloor +1$}{}\label{line:plcrsig}
    \State $\lit{multicast}(tx_{plcr},\{msg_j\})$ to $V_{\Psi}$\label{ref:plcrcont}
    \EndSmallIf
    \SmallElse{ $\Gamma.\lit{send(}\Omega,tx_{plcr}\lit{)}$} \Comment{enough signatures}\label{line:chi}
    \EndSmallElse
    \EndSmallIf
  \end{algorithmic}
  
\end{algorithm}
\subsection{Closing a Platypus Chain}
\label{subsec:plcl}

  A Platypus bulk close transaction splices all funds out of the Platypus Blockchain without compromising its security (agreement), and without requiring all validators to join together in its destruction (termination). It is still a normal transaction in the Platypus blockchain, meaning that it requires enough validators $m_0$ agreeing to writing it in $\Psi$. Algorithm~\ref{alg:plcl} shows the protocol to bulk close a Platypus chain. A Platypus bulk close transaction signed by some processes returns back the updated balances of all processes in the parentchain $\Omega$, unless it is aborted. Once written in both $\Psi$ and $\Omega$, the coins can be spent only on $\Omega$.

  \begin{algorithm}[H]
    \caption{Platypus bulk close procedure}
    \label{alg:plcl}
    \begin{algorithmic}[1]
      \small 
      \Statex  $\vartriangleright$ State of the algorithm
      \Statex $\Omega$, $\Psi$, $\Gamma$, the Blockchain, Platypus Blockchain and protocol
      \Statex $P_{\Psi}$,$V_{\Psi}$,  the set of processes and validators in $\Psi$
      \Statex $\mathds{C}_i$, the coins that belong to process $p_i$
      \Statex $tx_{plcl}\gets \bot$
      \Statex \rule{0.4\textwidth}{0.4pt}
      \Statex $\vartriangleright$ PHASE 1: Some process $p_0$ creates and broadcasts
      \State$tx_{plcl} \; \gets \lit{createBulkCloseTx(}\mathds{C}_{P_\Psi}\lit{)}$
      \State $tx_{plcl} \; \gets \lit{sign_i(}tx_{plcl}\lit{)}$
      \State $\lit{multicast}(\lit{tx_{plcl}})$ to $V_{\Psi}$
    
    \Statex \rule{0.4\textwidth}{0.4pt}
    \Statex $\vartriangleright$ PHASE 2: $p_i\in V_{\Psi}$ signs and broadcasts transaction\\
      \textbf{when} $tx_{plcl}$ 
      is received \textbf{and} 
      \textbf{not} $\lit{is\_written}(\Psi,tx_{plcl})$
      
      \State $\lit{verify(}tx_{plcl}\lit{)}$
      \State $tx_{plcl} \; \gets \lit{sign_i(}tx_{plcl}\lit{)}$
      \SmallIf{$\lit{num\_signers(}tx_{plcl}\lit{)} < \lfloor 2|V_\Psi|/3\rfloor +1$}\label{line:sig}{}
      \State $\lit{multicast}(tx_{plcl})$ to $V_{\Psi}$
           \EndSmallIf
            \SmallElse{} 
            {$r\gets\Gamma.\lit{acsend(}\Psi,tx_{plcl}\lit{)}$ \Comment{Get back $tx_{plcl}$, or $tx_{Abort}$} \label{line:cor}}
\EndSmallElse
\rule{0.4\textwidth}{0.4pt}
    \Statex $\vartriangleright$ PHASE 3: $\Gamma.\lit{acsend}(\Psi,tx_{plcl})$ generates a response, any $p_i$ can send to $\Omega$
    \Statex \textbf{when} $r$ is received
      \SmallIf{$r.type=\lit{ABORT}$}{$\Gamma.\lit{send(}\Omega,r.tx_{Abort}\lit{)}$\label{line:par3}}
      \EndSmallIf
      \SmallElseIf{$r.type=\lit{COMMIT}$}{$\Gamma.\lit{send(}\Omega,r.tx_{plcl}\lit{)}$\label{line:par2}}
      \EndSmallElseIf
  \end{algorithmic}
\end{algorithm}
\subsection{Aborting a Closing Attempt}
A Platypus bulk close transaction with insufficient signatures can either be a valid, ongoing Platypus bulk close, or an attempt to commit fraud. To prevent this, and guarantee termination and ABORT-validity, we introduce the abort transaction.

A transaction may be invalid if it spends a coin formerly owned by a user, but that was transferred to another user later on in $\Psi$.
The abort function runs for every Platypus bulk close transaction received that is not valid, i.e. that spends some input already spent. If the transaction is not valid due to signatures not matching, then it will not be written in the parentchain, so the abort function ignores this case.

Therefore, a user can see a transaction $tx$ is not valid if an old owner claims ownership of a spent coin in $tx$, as checked by $\lit{coins\_spent(...)}$, shown in Algorithm~\ref{alg:val}. Notice that, while a COMMIT requires $m_0$ validators to commit to the transaction (such as a Platypus bulk close transaction), any process $p\in P_\Psi$ can create a valid abort transaction. The call to $\lit{extract\_spent}(tx)$ returns the coins that were spent. The call to $\lit{get\_block\_min\_blockheight(}\mathds{C}_S\lit{)}$ returns the block of minimum blockheight out of all the blocks that store a transaction spending each of the spent coins, i.e. Proofs-of-Fraud (PoFs). Finally, $\lit{validators}(b/tx)$ returns the set of validators of block $b$ or transaction $tx$.

  \begin{algorithm}[H]
    \caption{Abort procedure}
    \label{alg:val}
    \begin{algorithmic}[1]
      \small 
      \Statex  $\vartriangleright$ State of the algorithm
      \Statex $\Omega$, $\Psi$, $\Gamma$, the Blockchain, Platypus Blockchain and protocol.
      \Statex $\ms{\mathds{C}_S}\gets \bot$, the subset of spent coins from $\mathds{C}$
      \Statex $\ms{b_p}\gets\bot$, integer s.t. $\Psi[\ms{b_p}]$ proves some coin was spent
      \Statex $\ms{vPoF}\gets \bot$, Proofs-of-Fraud of validators
      \Statex $tx_{abort}\gets\bot$
      \Statex      \rule{0.4\textwidth}{0.4pt}
    \Function{$\lit{abort}$($tx_{plcl}$)}{}
    \State $\ms{\mathds{C}_S}\gets \lit{extract\_spent(}tx_{plcl}\lit{)}$
      \State $\ms{b_p}\gets \lit{get\_block\_min\_blockheight(}\ms{\mathds{C}_S}\lit{)}$\label{line:blockproof}
      \State $\ms{vPoF} \gets \emptyset$
      \ForEach{$\ms{block}$ \textbf{in} $\Psi[b_p,...,-1]$}
      \State $\ms{vPoF}.\lit{append(}\lit{validators(}\ms{block}\lit{)}\cap \lit{validators(}\ms{tx_{plcl}}\lit{)}\lit{)}$
      \EndFor
      \State $tx_{abort} \; \gets \lit{createAbortTx(}\ms{tx_{plcl},b_p,vPoF}\lit{)}$\label{line:vPoF}
      \State $\Gamma.\lit{send(}\Omega,tx_{abort}$)
      \State $\Gamma.\lit{send(}\Psi,tx_{abort}$)
      \EndFunction
      \Statex \rule{0.4\textwidth}{0.4pt}
      \Statex $\vartriangleright$ all $p_i\in P_\Psi$ run abort when receiving any invalid $tx_{plcl}$
      \State \textbf{when} $tx_{plcl}$ is received
      \If{coins\_spent($tx_{plcl}$)} \Comment{some coins in $tx_{plcl}$ were spent, invalid}
      \State abort($tx_{plcl}$)
      \EndIf
    \end{algorithmic}
  \end{algorithm}

    Intuitively, this algorithm proves invalidity by iterating through $\Psi$, looking for validators that validated both this bulk close and some progress in $\Psi$ that conflicts with it (i.e. some blocks that spent some of the coins). This set of validators is the set of \textit{fraudsters}. Other validators that only validated the transaction might simply have had an old view of the Platypus chain, under the partially synchronous model. Nonetheless, the existence of such block is enough to create the abort transaction, even if the set of fraudsters is empty.
    
    The iteration starts from the block with minimum blockheight of all the Blocks that show that some coin \textcent$\,$ was transferred from $p_i$ to $p_j$, for some $p_i$ that claims ownership of \textcent$\,$  $tx_{plcl}$, in line \ref{line:blockproof}. The algorithm then continues to account for fraudsters. From that block, the process iterates forward in $\Psi$, gathering some possible validators that may have validated both $tx_{plcl}$ and conflicting blocks, i.e. looking for fraudsters.


\section{Correctness}
\label{sec:cor}
In this Section, we analyze the correctness of the protocol. To consider its correctness, we must prove that the protocol satisfies all the properties of Offchain protocols, as defined in Section~\ref{sec:mod}. We start by proving the proper bootstrapping of a Platypus chain, i.e. the adversary never locks the algorithm nor gains enough relative power in the validators set. Then, we prove the properties of Offchain protocols when closing a Platypus chain.

\begin{theorem}
  \label{the:plcrter}
  Algorithm~\ref{alg:plcr} terminates.
\end{theorem}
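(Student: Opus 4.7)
The plan is to walk through the four phases of Algorithm~\ref{alg:plcr} and argue that each one terminates under the partial synchrony assumption, given the Byzantine bound $f \leq t_1 = \lceil m_v/3 \rceil - 1$. The key enabling facts are: (i) after the global stabilization time every message sent by a correct process is eventually delivered, (ii) the Byzantine bound guarantees at least $m_v - (\lceil m_v/3 \rceil - 1) \geq \lfloor 2m_v/3 \rfloor + 1$ correct validators participate, and (iii) the parent chain $\Omega$ uses a consensus protocol whose Termination property means $\lit{send}(\Omega, \cdot)$ returns in finite time once invoked.

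First I would handle Phases 1 and 2, which are straightforward: the initiator $p_0$ signs and multicasts $\ms{msg}_0$, and after GST every correct process in $P_\Omega$ receives it. Each correct process that wants to join then signs and multicasts its own $\ms{msg}_i$, and these also reach all correct validators-to-be. For Phase 3, I would assume the precondition that at least $m_v$ processes advertise $job = \lit{VALIDATOR}$ (otherwise the creation request is ill-formed and cannot terminate in any protocol); then the guard $|V_\Psi| = m_v$ at line~\ref{line:plcr1} is eventually satisfied at every correct validator, since the set $V_\Psi$ grows monotonically (the $p_j \not\in V_\Psi$ guard at line~\ref{line:coi} prevents Byzantine replays from affecting progress), triggering the multicast of $tx_{plcr}$ together with the collected $\{\ms{msg}_k\}$.

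For Phase 4, I would argue that every correct validator in $V_\Psi$ eventually receives $tx_{plcr}$ together with a valid set of signed join messages; because the join messages are signatures by the claimed owners (which cannot be forged by $F$), the call $\lit{verify}(tx_{plcr}, \{\ms{msg}_j\})$ succeeds, the validator signs, and multicasts. The branch at line~\ref{line:plcrsig} keeps re-multicasting until enough signatures are collected, but crucially each correct validator signs at most once per transaction, and there are at least $\lfloor 2m_v/3 \rfloor + 1$ correct validators, so after GST the aggregated signature count at some correct validator reaches the threshold and control passes to the $\lit{else}$ branch at line~\ref{line:chi}, invoking $\Gamma.\lit{send}(\Omega, tx_{plcr})$. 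By the Termination property of $\Omega$'s underlying consensus, this call returns once $tx_{plcr}$ (or a transaction spending the same funds) is written in $\Omega$, at which point the $\lit{is\_written}$ guard at line~\ref{line:plcrcond} disables further firing of the handler at every correct validator.

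The main obstacle is the Phase 4 analysis: I need to rule out the scenario where Byzantine validators flood the network with partial signatures or keep the handler looping forever without ever crossing the threshold. This is handled by two observations—correct validators sign each transaction at most once so their contributions are bounded but nonzero, and the threshold $\lfloor 2m_v/3 \rfloor + 1$ is exactly matched by the guaranteed correct majority, so progress is immune to Byzantine withholding. A secondary subtlety is ensuring that $p_0$ itself (the initiator in Phase 1) is correct; if $p_0$ is Byzantine and never issues $\ms{msg}_0$, the algorithm is never invoked and termination is vacuous, whereas if $p_0$ is correct the preceding argument goes through.
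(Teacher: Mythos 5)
Your proposal is correct and follows essentially the same route as the paper's proof: both identify the two blocking points (gathering $m_v$ willing validators at line~\ref{line:plcr1}, and collecting $m_0=\lfloor 2m_v/3\rfloor+1$ signatures at line~\ref{line:plcrsig}) and argue each is unblocked because at least $m_v$ processes are assumed to volunteer as validators and the correct validators alone number at least $m_0$. Your version is somewhat more explicit about GST, the non-forgeability of join messages, and the termination of the final $\Omega$-write, but these are refinements of the same argument rather than a different approach.
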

\begin{proof}
  The algorithm waits for enough Platypus Creation signed messages $\{msg_i\}$ from validators (line~\ref{line:plcr1}) and to get enough signatures from validators for the Platypus Creation transaction (line~\ref{line:plcrsig}). Since we assume there are at least $m_v$ processes that explicitly state that want to get in $\Psi$ as validators (Section~\ref{sec:adv}), the first condition is met to terminate. That is, a correct process will eventually produce and broadcast a valid Platypus Creation transaction with signed $\{msg_i\}$ messages of each of the users that committed to participate in such transaction.

  As for the signatures of the $tx_{plcr}$ transaction, notice only $m_0$ of the $m_v$ are required to sign the transaction for it to become valid and create the Platypus Blockchain $\Psi$. Since $f<m_0$, and only one transaction can be written in $\Omega$, we have that only with signatures from the correct processes it is enough to guarantee this condition. Therefore, the protocol terminates.
\end{proof}
\begin{theorem}
  \label{the:plcrexp}
  Let $\Psi$ be a Platypus Blockchain created by Algorithm~\ref{alg:plcr}, and let a correct process $p_i\in P_\Omega$. If $p_i\in P_\Psi$ then $p_i$ explicitly stated to be in $P_\Psi$ by sharing a signed Platypus Creation message $msg_i$.
\end{theorem}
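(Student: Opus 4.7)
The plan is to trace through Algorithm~\ref{alg:plcr} and observe that every avenue by which a process enters $P_\Psi$ requires the presence of a signed $msg_j$ from that process, together with the fact that an unforgeable signature from a correct $p_i$ can only have been produced by $p_i$ itself.

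First I would fix the hypothesis: since $\Psi$ exists, there is a creation transaction $tx_{plcr}$ with identifier $\ms{plid}$ that was delivered to $\Omega$ by some validator at line~\ref{line:chi}. The condition at line~\ref{line:plcrsig} shows that $tx_{plcr}$ carries at least $m_0 = \lfloor 2m_v/3\rfloor + 1$ validator signatures. Because the adversary controls at most $f \le t_1 = \lceil m_v/3\rceil - 1$ processes and $m_0 > f$, at least one correct validator $v$ must have signed $tx_{plcr}$. By inspection of PHASE 4, $v$ only signs after the guard at line~\ref{line:plcrver} succeeds, i.e. after $\lit{verify}(tx_{plcr}, \{msg_j\}_{p_j \in P_\Psi})$ returns true; this verification includes checking that for every process listed in $P_\Psi$ inside $tx_{plcr}$ there is a corresponding signed join message.

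Next I would zoom in on $p_i$. Since $p_i \in P_\Psi$, the set $\{msg_j\}_{p_j \in P_\Psi}$ that $v$ verified must contain a message $msg_i$ whose signature $\sigma_i$ passes verification under $p_i$'s public key. The adversary model stipulates that $F$ \emph{cannot forge signatures of processes outside the coalition}, and $p_i$ is correct, hence outside the coalition. Therefore $\sigma_i$ must actually have been produced by $p_i$; by the format $msg_i = \langle \mathds{C}_i, \ms{plid}, job_i, \sigma_i\rangle$, this signature binds $p_i$ to the very identifier $\ms{plid}$ of the Platypus chain under consideration, so $p_i$ explicitly signed a Platypus Creation message declaring participation in $\Psi$.

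The only obstacle I anticipate is making sure verification really rejects a $tx_{plcr}$ that lists a process $p_i$ without a matching valid $msg_i$, i.e. that a Byzantine initiator cannot smuggle extra names into $P_\Psi$. I would address this by appealing to the contract of $\lit{verify}$ as used in line~\ref{line:plcrver}: it takes the bundle of signed messages precisely so that the set of declared participants can be cross-checked against their signatures. Under this reading, any $tx_{plcr}$ that names a correct $p_i$ without a legitimate $msg_i$ fails verification at every correct validator, and since $m_0 > f$ at least one correct signature is required, such a transaction could never have accumulated enough signatures to be sent to $\Omega$. This contradicts the fact that $\Psi$ was created, and therefore $p_i$ must have broadcast $msg_i$ itself at PHASE~2.
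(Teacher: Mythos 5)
Your proposal is correct and follows essentially the same route as the paper's own proof: both argue that $m_0 > f$ forces at least one correct validator to have signed $tx_{plcr}$, that such a validator only signs after $\lit{verify}$ checks the bundled $\{msg_j\}$, and that unforgeability of $p_i$'s signature then forces $p_i$ to have produced $msg_i$ itself. Your version merely makes the unforgeability step explicit where the paper leaves it implicit, which is a harmless (indeed welcome) refinement rather than a different argument.
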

\begin{proof}
  We prove this by contradiction. Suppose a $tx_{plcr}$ creation transaction such that some coins $Coins_i$ from process $p_i$ are included, without $p_i$ sending a signed Platypus Creation message $msg_i$. Suppose that transaction was written in $\Omega$, creating the Platypus Blockchain $\Psi$. For such transaction to be written in $\Omega$, it must be valid, i.e. it must hold at least $m_0$ signatures from validators. Since $f<m_0$, at least $m_0-f$ correct processes signed and verified such transaction (line~\ref{line:plcrval}). However, the correct processes could not validate such transaction without verifying its content (line~\ref{line:plcrver}), which includes verifying all the signed messages from all processes whose coins are involved in $tx_{plcr}$. Therefore, this is impossible without $p_i$ sending a signed Platypus Creation message $msg_i$.
  \end{proof}
\begin{corollary}
  \label{cor:01}
  Let $\Psi$ be a Platypus Blockchain created by Algorithm~\ref{alg:plcr}, and let a correct process $p_i\in P_\Omega$. If $p_i\in P_\Psi$ then $p_i$ explicitly stated to be in $P_\Psi$.
\end{corollary}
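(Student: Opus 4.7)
The plan is to derive this as an immediate consequence of Theorem~\ref{the:plcrexp}. The theorem already gives a stronger conclusion: whenever a correct $p_i \in P_\Psi$, that process must have shared a signed Platypus Creation message $msg_i$. The corollary asks only that $p_i$ ``explicitly stated'' the intent to be in $P_\Psi$, so it suffices to argue that the act of sharing a signed $msg_i$ is precisely such an explicit statement.

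First I would apply Theorem~\ref{the:plcrexp} to the hypothesized correct process $p_i \in P_\Psi$, obtaining a signed message $msg_i = \langle \mathds{C}_i, \ms{plid}, job_i, \sigma_i\rangle$ produced and multicast by $p_i$ in phase~2 of Algorithm~\ref{alg:plcr}. Then I would observe that this message carries the specific Platypus identifier $\ms{plid}$ and is signed by $p_i$, so by the unforgeability of signatures (no adversary outside the coalition can forge $p_i$'s signature per Section~\ref{sec:adv}) the message must have been produced by $p_i$ itself with full knowledge of which Platypus chain it is joining. Hence sharing $msg_i$ is, by definition, an explicit declaration by $p_i$ of the intent to participate in $P_\Psi$.

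There is essentially no obstacle here; the corollary is a weakening of the theorem's conclusion, replacing ``explicitly stated $\ldots$ by sharing a signed Platypus Creation message'' with the generic ``explicitly stated.'' The only subtlety worth mentioning is to make clear that ``explicitly stated'' is interpreted in the natural way, namely as a signed, verifiable declaration, which is exactly what $msg_i$ provides. The proof therefore reduces to a one-line invocation of Theorem~\ref{the:plcrexp} together with a remark that a signed Platypus Creation message is the operational meaning of an explicit statement in our model.
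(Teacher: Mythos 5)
Your proposal is correct and matches the paper, which gives no separate proof of Corollary~\ref{cor:01} precisely because it is the conclusion of Theorem~\ref{the:plcrexp} with the clause ``by sharing a signed Platypus Creation message $msg_i$'' dropped. Your one-line invocation of the theorem, plus the remark that sharing $msg_i$ is the operational meaning of ``explicitly stated,'' is exactly the intended argument.
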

Notice that in Algorithm~\ref{alg:plcr} the 'only if' direction of Theorem~\ref{the:plcrexp} and Corollary~\ref{cor:01} is not necessarily true, should there be more than $m_v$ processes that reply to join. This does not affect the correctness of the protocol though.
\begin{lemma}
  \label{lem:mvv}
  Let $\Psi$ be a Platypus Blockchain created by Algorithm~\ref{alg:plcr}, then its Platypus Creation transaction $tx_{plcr}$ has $|V_\Psi|=m_v$ validators and was signed by $m_0$ of them.
\end{lemma}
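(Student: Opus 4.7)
The plan is to read the two claims directly off the control flow of Algorithm~\ref{alg:plcr}, since both are enforced syntactically by guard conditions, and then to invoke Theorem~\ref{the:plcrter} to ensure such a transaction actually reaches $\Omega$ and thereby instantiates $\Psi$.

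For the first claim, $|V_\Psi|=m_v$, I would trace PHASE 3. Each correct validator only builds the transaction $tx_{plcr}$ via $\lit{createPlatypusTx}$ and signs/multicasts it inside the inner conditional guarded by $|V_\Psi|=m_v$ at line~\ref{line:plcr1}. So any $tx_{plcr}$ that leaves PHASE 3 refers to a validator set of exactly size $m_v$. Since PHASE 4 never adds new entries to $V_\Psi$ (it only signs the already-fixed transaction), the validator set associated with the transaction that is eventually written in $\Omega$ at line~\ref{line:chi} still has size $m_v$. Combined with Theorem~\ref{the:plcrexp}, these validators are precisely processes who explicitly opted in as validators, so $|V_\Psi|=m_v$ in the committed transaction defining $\Psi$.

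For the second claim, I would observe that in PHASE 4 a validator $p_i$ either (a) re-multicasts the transaction while $\lit{num\_signers}(tx_{plcr}) < \lfloor 2m_v/3\rfloor + 1 = m_0$ (line~\ref{line:plcrsig}), or (b) invokes $\Gamma.\lit{send}(\Omega,tx_{plcr})$ at line~\ref{line:chi}. Branch (b) is only taken when the guard of (a) fails, i.e., when at least $m_0$ signatures are present on $tx_{plcr}$. Hence any $tx_{plcr}$ that reaches $\Omega$, and therefore any $tx_{plcr}$ that created $\Psi$, carries at least $m_0$ valid signatures from members of $V_\Psi$. That such a moment is reached at all follows from Theorem~\ref{the:plcrter}.

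The only subtle point, and the one I would be most careful about, is ruling out a committed $tx_{plcr}$ whose signers are not all drawn from the $V_\Psi$ fixed in PHASE 3, or whose listed validator set was inflated by a Byzantine process. This is handled by the $\lit{verify}$ step at line~\ref{line:plcrver} combined with the unforgeability of signatures in our adversary model and Theorem~\ref{the:plcrexp}: every signature accepted on $tx_{plcr}$ by a correct validator corresponds to a process that explicitly joined $\Psi$ with a signed $\ms{msg}_j$, so the $m_0$ signatures are genuinely signatures of members of the $V_\Psi$ of size $m_v$ determined in PHASE 3. With this observation both assertions of the lemma follow.
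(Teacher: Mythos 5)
Your proposal is correct and follows essentially the same route as the paper's proof: both read the two guarantees off the guards at lines~\ref{line:plcr1} and~\ref{line:plcrsig}, invoke Theorem~\ref{the:plcrter} to ensure the transaction actually reaches $\Omega$ with $m_0$ signatures, and use the $\lit{verify}$ step together with $f<m_0$ and Theorem~\ref{the:plcrexp} to rule out a Byzantine-assembled transaction with a bogus validator set. The only cosmetic difference is ordering --- the paper leads with ``$f<m_0$ forces a correct signer, who checked that there are $m_v$ validators,'' whereas you lead with the correct-proposer control flow and defer that point to your final paragraph --- but the ingredients and conclusions are identical.
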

\begin{proof}
  Given $f<m_0$ and $m_0$ are required for a Platypus Creation transaction to be valid, we have that some correct processes validated it. These correct processes verify that there are $m_v$ validators, and by Theorem~\ref{the:plcrexp} all validators explicitly stated they wanted to join as validators. Without enough signatures the algorithm does not terminate, since messages keep being sent (line~\ref{ref:plcrcont}), and $tx_{plcr}$ is not yet written in $\Omega$ (which is a condition in line~\ref{line:plcrcond}). By Theorem~\ref{the:plcrter} we know that the algorithm terminates. Thus, a valid $tx_{plcr}$ receives $m_0$ signatures, of which some processes could only have signed if $m_v$ processes were in the transaction as validators.
  \end{proof}

\begin{theorem}
  \label{the:ter}
  Algorithm~\ref{alg:plcl} guarantees the termination property.
\end{theorem}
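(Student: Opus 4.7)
The plan is to show that every correct process executes a decision branch (line~\ref{line:par2} or line~\ref{line:par3}) of Algorithm~\ref{alg:plcl} whenever some process proposes COMMIT on a sequence $seq$. Since the paper identifies ``proposing COMMIT'' with an invocation of $\Gamma.\lit{acsend}(\Psi, tx_{plcl})$ at line~\ref{line:cor}, the task reduces to two sub-claims: (a) the $\lit{acsend}$ call eventually returns a response $r$, and (b) $r$ is delivered to every correct validator.

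For (a), I would invoke the assumed termination of $\Psi$'s consensus---a deterministic partially synchronous protocol such as DBFT, as stated at the end of Section~\ref{sec:pro}. Submitting $tx_{plcl}$ via $\lit{acsend}$ hands the transaction to $\Psi$'s consensus layer, which by its liveness property eventually writes in some block either $tx_{plcl}$ itself, yielding a response of type COMMIT, or a conflicting $tx_{abort}$ produced by some correct process through Algorithm~\ref{alg:val}, yielding a response of type ABORT. For (b), I would simply quote the stated semantics of $\lit{acsend}$ in Section~\ref{subsec:plcr}: the generated response is received by every validator. The Phase~3 guard ``when $r$ is received'' then fires at every correct validator, which unconditionally executes one of the two decision lines.

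The delicate point, which I would treat most carefully, is ruling out the situation where no relevant transaction is ever written in $\Psi$ in response to the proposal. This requires two observations. First, $\Psi$'s consensus only accepts $tx_{plcl}$ if it carries the $m_0 = \lfloor 2m_v/3 \rfloor + 1$ signatures demanded at line~\ref{line:sig}; given the Byzantine bound $f \le t_1 = \lceil m_v/3 \rceil - 1$, every accepted $tx_{plcl}$ must include at least one correct signature, so at least one honest validator has had the chance to certify it consistent with its local view of $\Psi$. Second, whenever $tx_{plcl}$ is inconsistent with the prior history of $\Psi$, some correct validator detects this through the $\lit{coins\_spent}$ check of Algorithm~\ref{alg:val} and submits a $tx_{abort}$ to $\Psi$ via $\Gamma.\lit{send}$. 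Combining the two, the consensus layer always holds an admissible candidate to decide upon, so its liveness yields the response $r$ that termination requires.
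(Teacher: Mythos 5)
Your proof follows essentially the same route as the paper's: both reduce termination to the liveness of $\Psi$'s Byzantine fault-tolerant consensus (so that the $\lit{acsend}$ call at line~\ref{line:cor} eventually returns either the written $tx_{plcl}$ or a $tx_{abort}$) together with the fact that the response is delivered to every validator, and your treatment of why the consensus layer always has an admissible candidate to decide on is actually more careful than the paper's one-line appeal to BFT liveness. The only piece the paper covers that you leave implicit is that the concluding $\Gamma.\lit{send}(\Omega,\cdot)$ calls at lines~\ref{line:par2} and~\ref{line:par3} also return, which requires the parentchain's liveness under $f<n_v/3$; this is a minor addendum rather than a gap.
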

\begin{proof}
  The protocol only waits for responses 4 times: to get coins from at least 
  $m_0$ signatures (line~\ref{line:sig}), and for the transaction to get in the Platypus Blockchain and parentchain (lines~\ref{line:cor},~\ref{line:par3} and~\ref{line:par2}). All these steps are independent of one another, i.e. not the same validators are required in each step. Therefore, we consider them independently. Since $m_0\geq 2m_v/3+1$, we have that, regardless of what the Adversary decides to do, $m_0$ correct nodes will eventually send enough signatures, and coins. Since we have both the Platypus Blockchain and parentchain consensus protocols are Byzantine Fault Tolerant, the calls that wait for a reply will terminate if $f<m_v/3$ and $f<n_v/3$, thus generating a response in the Platypus protocol (line~\ref{line:cor}), which could be either a COMMIT or an ABORT. Therefore, a correct process decides COMMIT or ABORT as the result of the call to $\lit{acsend(...)}$ in line~\ref{line:cor}. In either case, the protocol continues sending the proper transaction to the parentchain (lines~\ref{line:par3} and~\ref{line:par2}), which also terminates.
\end{proof}
\begin{lemma}
  \label{lem:01}
  In Algorithm~\ref{alg:plcl}, given a bulk close transaction listing a sequence $seq$ that process $p_i$ proposed to COMMIT, either all correct processes of the Platypus chain $\Uppsi$ decide ABORT to include the transaction in the Platypus Blockchain, or all correct processes decide COMMIT.
\end{lemma}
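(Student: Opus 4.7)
The plan is to reduce the agreement on the decision to the agreement property of the underlying Platypus Blockchain consensus. Observe that in Algorithm~\ref{alg:plcl}, the decision between COMMIT and ABORT for a bulk close transaction $tx_{plcl}$ is not taken locally but is the result $r$ returned by the call $\Gamma.\lit{acsend}(\Psi, tx_{plcl})$ at line~\ref{line:cor}. By the model assumption in Section~\ref{sec:pro}, $\Psi$ is implemented using a deterministic, partially synchronous BFT consensus (such as DBFT/Red Belly) that tolerates up to $\lceil m_v/3\rceil - 1$ Byzantine validators, and is therefore unforkable. As specified in Section~\ref{subsec:plcr}, the response of $\lit{acsend}$ is broadcast to all validators as the outcome of the Platypus Blockchain, so every correct validator obtains the same $r$.

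First I would argue that exactly one of two things ends up in $\Psi$ at the block where $tx_{plcl}$ is decided: either the bulk close transaction $tx_{plcl}$ itself, or a conflicting abort transaction $tx_{abort}$ produced by Algorithm~\ref{alg:val} for a coin that $tx_{plcl}$ double-spends. By the agreement property of the consensus on $\Psi$, the two cases are mutually exclusive: if any correct validator observes that $tx_{plcl}$ has been written, no correct validator can observe an abort transaction written in its place, and vice versa. Hence the type of the response $r$ (COMMIT or ABORT) is identical at every correct process.

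Next I would trace both branches of Phase~3 of Algorithm~\ref{alg:plcl}. In the COMMIT branch, every correct process executes $\Gamma.\lit{send}(\Omega, r.tx_{plcl})$ at line~\ref{line:par2}; by Theorem~\ref{the:ter} the call terminates and, because $\Omega$ is itself unforkable, every correct process decides COMMIT on the same sequence $seq$. In the ABORT branch, every correct process executes $\Gamma.\lit{send}(\Omega, r.tx_{Abort})$ at line~\ref{line:par3} and, symmetrically, decides ABORT on $seq$. Since all correct processes receive the same $r$, they all end up in the same branch and therefore make the same decision, which establishes the lemma.

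The main obstacle is justifying that the two possible outcomes of $\lit{acsend}(\Psi, tx_{plcl})$ are indeed mutually exclusive under partial synchrony, given that a correct process can launch Algorithm~\ref{alg:val} concurrently with the bulk close. This requires invoking the Validity and Agreement of the consensus on $\Psi$: a block containing $tx_{plcl}$ can only be decided if a quorum of validators considered it valid, i.e.\ did not find a conflicting spent coin according to $\lit{coins\_spent}$, so no conflicting $tx_{abort}$ for the same inputs can simultaneously be ordered in $\Psi$. Once this single-outcome property is in place, agreement on the decision follows directly from the agreement of the two underlying blockchains $\Psi$ and $\Omega$.
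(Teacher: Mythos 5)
Your overall structure is sound and genuinely different in flavor from the paper's: you reduce agreement on the decision to the agreement of the two underlying blockchains and to the fact that every correct validator receives the same response $r$ from $\lit{acsend}(\Psi,tx_{plcl})$, whereas the paper works directly at the level of signature certificates. The paper's proof partitions $V_\Psi$ into the adversary $F$ and two disjoint sets $Q_1$ (validators of the block $b_1$ containing $tx_{plcl}$) and $Q_2$ (validators of an earlier block $b_2$ spending one of the coins claimed in $tx_{plcl}$), and derives $|Q_1|,|Q_2|\geq m_v/3+1$ from the requirement that each certificate carry $m_0\geq 2m_v/3+1$ signatures with $f<m_v/3$, contradicting $|Q_1|+|Q_2|+|F|=m_v$. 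That counting argument is the entire content of the lemma.

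The gap in your version is precisely at the step you yourself flag as ``the main obstacle.'' You discharge mutual exclusivity by appealing to the Agreement property of the consensus on $\Psi$, but Agreement does not give it to you: the conflicting block $b_2$ and the block $b_1$ containing $tx_{plcl}$ sit at \emph{different} heights of $\Psi$, so consensus agreement does not forbid both from being in the chain. Likewise, the abort transaction of Algorithm~\ref{alg:val} does not need to be ordered in $\Psi$ in place of $tx_{plcl}$ to be valid --- any single process in $P_\Psi$ can construct it, and its validity rests on the Proof-of-Fraud extracted from the already-written block $b_2$, not on winning a consensus instance against $tx_{plcl}$. Your claim that ``a quorum of validators did not find a conflicting spent coin, so no conflicting abort can be ordered'' also fails under partial synchrony on its own: correct validators with stale views can sign $tx_{plcl}$ without ever having seen $b_2$ (the paper explicitly allows this). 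What actually rules out the coexistence of a valid COMMIT certificate and a valid Proof-of-Fraud is quorum intersection: the $m_0$-signature quorum on $b_1$ and the $m_0$-signature quorum on $b_2$ must share at least one correct validator, and that validator, having validated the spend in $b_2$, would refuse to validate $tx_{plcl}$ and would run $\lit{abort}$ instead. Your proof needs this counting step made explicit; once it is, the rest of your reduction (same $r$ at every correct process, then both branches of Phase~3 of Algorithm~\ref{alg:plcl} terminating identically) goes through.
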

\begin{proof}
  We prove this by contradiction.  First, notice that, for a process to propose COMMIT on a Platypus Bulk Close transaction, it is necessary to provide a block where that transaction was written in the Platypus Blockchain. We consider the following network partition into three sets: $F$, the set of the adversary coalition of size $f<m_v/3$, $Q_1$ and $Q_2$. We consider that, at some point, all validators in $Q_1$ signed a block $b_1$ to validate a Platypus Bulk Close transaction, whereas validators in $Q_2$ validated a different block $b_2$ that spent from one of the same outputs (conflicting transactions).
  For one correct process to propose COMMIT, it is necessary that $b_1$ was validated by at least $m_0\geq 2m_v/3+1$ validators. Analogously, for one process to propose ABORT, it has to provide valid proof through a block $b_2$ validated by at least $m_0\geq 2m_v/3+1$ validators, in which some coins were spent from the owners claimed in the Platypus Bulk Close. A COMMIT proposal is undecided for as long as a valid ABORT is proposed, or enough validators validate the COMMIT attempt.

  In this case, we consider that one correct process proposes ABORT, meaning that it has a valid ABORT transaction, i.e. $b_2$ was validated by at least $2m_v/3+1$ validators. Therefore, $|Q_2\cup F|\geq 2m_v/3+1$. However, if another correct process committed to block $b_1$, then block $b_1$ has $2m_v/3+1$ validators. Thus, $|Q_1\cup F|\geq 2m_v/3+1$. Recall that $|F|=f<m_v/3$ and therefore $|Q_1|\geq m_v/3+1$ and $|Q_2|\geq m_v/3+1$, but this is impossible since $F\cup Q_1\cup Q_2=V_\Psi$ and $Q_1\cap Q_2=Q_1\cap F=F\cap Q_2=\emptyset$, and each account is only used once 
  . It follows that only $Q_2$ or only $Q_1$ had enough validators, and thus only some correct processes proposing and deciding ABORT (after which all will decide ABORT), or some processes deciding COMMIT (leading all other processes to decide COMMIT once they update their view of the childchain, since they do not decide ABORT) are possible.
\end{proof}
\begin{lemma}
  \label{lem:02}
  A Platypus Bulk Close transaction (COMMIT) can only be valid in $\Omega$ if it is already written in $\Psi$.
\end{lemma}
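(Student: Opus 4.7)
I would argue by contradiction: suppose a Platypus Bulk Close transaction $tx_{plcl}$ is written (and hence valid) in $\Omega$ as a COMMIT, yet $tx_{plcl}$ was never written in $\Psi$. The plan is to trace back the only paths by which $tx_{plcl}$ could have reached $\Omega$ and show that each forces $tx_{plcl}$ to have first been written in $\Psi$.

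First I would recall, by the same counting argument used in Lemma~\ref{lem:mvv} and Lemma~\ref{lem:01}, that for $\Omega$'s validators to accept $tx_{plcl}$ as a COMMIT, the transaction must carry at least $m_0 \geq \lfloor 2m_v/3 \rfloor + 1$ signatures from $V_\Psi$. Since the adversary controls at most $f < m_0$ validators, at least $m_0 - f \geq 1$ of those signatures come from correct validators. Inspecting Algorithm~\ref{alg:plcl}, a correct validator only produces a signature on $tx_{plcl}$ in PHASE~2 after the \texttt{verify} step, and once it observes that $m_0$ signatures have been accumulated its single way of forwarding $tx_{plcl}$ further is through the call $\Gamma.\lit{acsend}(\Psi, tx_{plcl})$ (line~\ref{line:cor}). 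By the specification of \lit{acsend}, this call returns a COMMIT response only after $tx_{plcl}$ has actually been written in $\Psi$, and every validator receives the same response since it is the output of the $\Psi$ consensus.

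Next I would handle the only remaining escape route: a Byzantine process that harvests the signatures of correct validators and attempts to push $tx_{plcl}$ to $\Omega$ without ever routing it through \lit{acsend}. The key observation is that the COMMIT payload that $\Omega$ validates in PHASE~3 is precisely an acsend-response produced by $\Psi$; equivalently, $\Omega$'s validity predicate for a bulk close requires evidence that $\Psi$ has written $tx_{plcl}$ (no conflicting block and no \emph{abort} transaction, cf.\ Algorithm~\ref{alg:val}). If no such evidence exists because $tx_{plcl} \notin \Psi$, then either some conflicting block was written in $\Psi$, in which case Algorithm~\ref{alg:val} produces a valid $tx_{abort}$ that any correct process can $\lit{send}$ to $\Omega$ (making $tx_{plcl}$ invalid in $\Omega$ by COMMIT/ABORT exclusivity, combining Lemma~\ref{lem:01} with PHASE~3), or no block at all exists, in which case $\Omega$'s validators have no COMMIT response from $\Psi$ to accept. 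Either case contradicts the assumption that $tx_{plcl}$ is valid as COMMIT in $\Omega$.

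The main obstacle I expect is the Byzantine-bypass case: making it airtight that a correct $\Omega$ validator will not accept a COMMIT payload that was not certified by the $\Psi$ consensus, even when the payload carries $m_0$ syntactically valid signatures. I would resolve this by appealing to the format of the acsend response (which includes the $\Psi$ block where $tx_{plcl}$ sits) together with Corollary~\ref{cor:01}/Lemma~\ref{lem:mvv}, so that a fabricated COMMIT lacking a $\Psi$-block witness fails $\Omega$'s validity predicate. Once that step is in place, the contradiction is immediate and the lemma follows.
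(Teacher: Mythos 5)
Your proposal is correct and follows essentially the same route as the paper's proof: both reduce the claim to the Byzantine-bypass scenario (a coalition harvesting enough signatures and pushing $tx_{plcl}$ straight to $\Omega$) and resolve it via the $m_0$ signature threshold, the abort/Proof-of-Fraud mechanism, and a deferral to the quorum-intersection argument of Lemma~\ref{lem:01}. The only difference is presentational: you make explicit that a COMMIT must carry a $\Psi$-block witness, a requirement the paper states only inside the proof of Lemma~\ref{lem:01} and leaves implicit here.
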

\begin{proof}
  For this, we assume that the transaction is sent to the parentchain without it being fully signed (i.e. beyond the threshold $m_0$) in the Platypus chain.
  A Byzantine process can try to send directly to the parentchain a not fully signed Platypus Bulk Close transaction (i.e. a Platypus Bulk Close transaction that was not written in the Platypus Blockchain). However, this transaction is not valid in the parentchain until it receives enough signatures. Notice that any process in the parentchain (i.e. Platypus Blockchain processes too) can eventually see this transaction, and generate a valid ABORT proof, or try to get it written in the Platypus Blockchain and then generate a valid COMMIT. Therefore, this proof is analogous to that of Lemma~\ref{lem:01}.
\end{proof}
\begin{theorem}
  \label{the:agr}
  Algorithm~\ref{alg:plcl} guarantees the agreement property.\\
\end{theorem}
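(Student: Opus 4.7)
The plan is to prove agreement by contradiction, reducing it to the already-established Lemma~\ref{lem:01} (every bulk close transaction is unanimously COMMIT-ed or unanimously ABORT-ed by the correct processes) together with Lemma~\ref{lem:02} (a COMMIT can be valid on $\Omega$ only after it is written in $\Psi$).

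First I would suppose, for contradiction, that two correct processes $p_i, p_j$ respectively decide COMMIT on two distinct sequences $seq\neq seq'$, coming from two bulk close transactions $tx_{plcl}$ and $tx'_{plcl}$. By inspection of Phase~3 of Algorithm~\ref{alg:plcl}, each such COMMIT decision is the output of a call $\Gamma.\lit{acsend}(\Psi,\cdot)$ returning $r.type=\mathrm{COMMIT}$, which by Lemma~\ref{lem:02} requires the corresponding transaction to have been written in $\Psi$. Hence both $tx_{plcl}$ and $tx'_{plcl}$ are contained in $\Psi$.

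Next I would observe that $tx_{plcl}$ and $tx'_{plcl}$ necessarily conflict: each bulk close transaction is created by $\lit{createBulkCloseTx}(\mathds{C}_{P_\Psi})$ and therefore consumes the very same UTXOs deposited at Platypus creation time. Two distinct such transactions share at least one input, so at most one of them can be admitted by the UTXO semantics. Because $\Psi$ is assumed to be an unforkable blockchain built on a deterministic partially synchronous BFT consensus (e.g.\ DBFT/Red Belly) tolerating $f<m_v/3$ Byzantine validators, $\Psi$ totally orders its transactions and rejects any transaction whose inputs have already been spent. Thus the two conflicting transactions cannot coexist in $\Psi$, contradicting the previous paragraph.

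As an alternative, redundant route I would sketch a quorum-intersection argument for robustness: each $tx_{plcl}$ gets in $\Psi$ only after being signed by at least $m_0=\lfloor 2m_v/3\rfloor+1$ validators (line~\ref{line:sig}); any two such quorums intersect in at least $\lceil m_v/3\rceil+1 > f$ validators, so at least one correct validator would have to sign both conflicting transactions, which Algorithm~\ref{alg:plcl}'s $\lit{verify}$ check rules out. The main obstacle—and the only subtle point—is the first argument's reliance on the fact that the ``same sequence'' is really encoded in the inputs of the single bulk-close transaction, so that two different sequences give rise to two distinct, UTXO-conflicting transactions; once this is spelled out, unforkability of $\Psi$ finishes the proof immediately.
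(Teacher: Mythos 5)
Your proposal is correct and rests on the same two pillars as the paper's proof: Lemma~\ref{lem:02} to force every COMMIT-decided bulk close into $\Psi$, and a conflict argument showing $\Psi$ cannot hold two such transactions. The difference is one of emphasis. The paper delegates the conflict argument entirely to Lemma~\ref{lem:01} (the $Q_1/Q_2/F$ quorum partition) and then spends most of its effort on a third case you omit: an ABORT decided on $\Omega$ without ever being written in $\Psi$, arising from a Byzantine process pushing an under-signed bulk close directly to the parentchain. That case is needed if one reads ``agreement'' as also requiring consistency between ABORT and COMMIT decisions, but for the property as literally stated (no correct process decides COMMIT on two different sequences) your omission is defensible and your proof is complete. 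Conversely, you make explicit a point the paper glosses over --- why two distinct sequences necessarily yield UTXO-conflicting transactions, namely that $\lit{createBulkCloseTx}(\mathds{C}_{P_\Psi})$ spends the same deposited outputs --- and you offer the unforkability of $\Psi$ as a cleaner primary route, keeping the quorum-intersection argument as a redundant fallback. The unforkability route buys brevity by leaning on the blockchain abstraction already assumed in Section~\ref{sec:pro}; the paper's quorum route is more self-contained and is what actually justifies that abstraction. Either way the argument closes.
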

\begin{proof}
  By Lemma~\ref{lem:02} we know that all COMMIT decisions are firstly written in $\Psi$. Then, Lemma~\ref{lem:01} shows that all processes in $P_\Psi$ reach the same decision to write in $\Psi$. We only have left the case that an ABORT is decided without it being written in the Platypus Blockchain $\Psi$. We need to prove that if that ABORT is decided then no process decided COMMIT. An ABORT outside of $\Psi$ can only happen if a process $p_i$ tried to COMMIT directly to $\Omega$ a Bulk close transaction that is not valid. Then, another process $p_j$ generated a valid Proof-of-Fraud included in an abort transaction, that ended up in an ABORT decision. Analogous to the proof of Lemma~\ref{lem:01}, we have a valid Proof-of-Fraud that gathers at least one conflicting transaction written in a previous block in $\Psi$, and therefore validated by at least $m_0$ validators. With the same approach used in Lemma~\ref{lem:01}, it is possible to prove that it is not possible for $p_j$ to propose a valid ABORT if one correct process $p_i$ decided COMMIT. Once a COMMIT is decided by enough processes, the funds go back to the Blockchain in the bulk close transaction of the sequence committed. Therefore, another sequence in another bulk close transaction will not be COMMIT-decided by any correct process. Hence, the agreement property is guaranteed.
\end{proof}
\begin{theorem}
  Algorithm~\ref{alg:plcl} guarantees the COMMIT-validity property.\\
\end{theorem}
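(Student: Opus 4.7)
The plan is to show that, under the hypothesis that some process proposes COMMIT on a bulk close transaction $tx_{plcl}$ for sequence $seq$ and no correct process proposes ABORT for $seq$, every correct process ends up executing line~\ref{line:par2} on a response $r$ of type COMMIT tied to $tx_{plcl}$. I will proceed in three steps: first argue that $tx_{plcl}$ collects the $m_0$ signatures it needs and is written in $\Psi$; second argue that no valid abort transaction can ever be produced; third combine these with Lemma~\ref{lem:01} and Lemma~\ref{lem:02} to conclude.

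First, since some process proposed COMMIT, it executed line~\ref{line:cor}, which requires that $tx_{plcl}$ gathered at least $m_0=\lfloor 2m_v/3\rfloor+1$ signatures in Phase~2. By hypothesis no correct process proposes ABORT for $seq$, so no correct validator invokes \emph{abort} on $tx_{plcl}$; in particular, for every correct validator $p_i$, the predicate \emph{coins\_spent}$(tx_{plcl})$ is false at the time $p_i$ processes it, so $p_i$ passes the \emph{verify} step and signs in Phase~2. Because $f<m_v/3$, there are at least $m_v-f\geq m_0$ correct validators, and by the partial synchrony assumption each of their signed copies of $tx_{plcl}$ is eventually delivered to every other correct validator. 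Hence every correct validator eventually observes $m_0$ signatures on $tx_{plcl}$ and triggers $\Gamma.\lit{acsend}(\Psi,tx_{plcl})$.

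Second, I would argue that the childchain consensus therefore delivers a COMMIT response. By the hypothesis, no correct process produces an abort transaction for $seq$ through Algorithm~\ref{alg:val}; any abort transaction originating from Byzantine processes would need an underlying block $b_p$ of $\Psi$ exhibiting a conflicting spend validated by $m_0$ validators, which, by the quorum intersection argument used in Lemma~\ref{lem:01}, would force some correct validator to have observed the conflict and thus to have refused to sign $tx_{plcl}$ or to have invoked \emph{abort} itself—contradicting the fact we just established. So the consensus on $\Psi$ cannot include a competing abort transaction for $seq$, and $tx_{plcl}$ is written in $\Psi$; the response $r$ returned by \lit{acsend} is therefore of type COMMIT, delivered to every correct process since it is the output of the $\Psi$ consensus.

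Finally, with $r.type=\lit{COMMIT}$ in hand, every correct process executes line~\ref{line:par2} and decides COMMIT for $seq$. Lemma~\ref{lem:01} guarantees this decision is consistent across correct processes, and Lemma~\ref{lem:02} rules out a rogue Byzantine COMMIT path going directly to $\Omega$ that could invalidate the decision. The main obstacle is the second step: we must rule out \emph{any} valid abort transaction—even one fabricated by Byzantine processes—from being produced for $seq$. The key technical lever is the quorum-intersection argument (already used in Lemma~\ref{lem:01}): any valid abort requires a conflicting block signed by $m_0$ validators, whose intersection with the $m_0$ signers of $tx_{plcl}$ necessarily contains a correct process, which would have either refused to sign $tx_{plcl}$ or invoked \emph{abort} itself—both contradicting the hypothesis that no correct process proposes ABORT.
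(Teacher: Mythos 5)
Your proof is correct and follows essentially the same route as the paper's: the paper's (much terser) argument simply combines Lemma~\ref{lem:01}, Lemma~\ref{lem:02} and Theorem~\ref{the:ter} to conclude that the forced decision cannot be ABORT and hence must be COMMIT, and the quorum-intersection reasoning you spell out in your second step is exactly what underlies the cited Lemma~\ref{lem:01}. The only cosmetic difference is that you construct the COMMIT path directly (signature collection, then ruling out any valid abort), whereas the paper first invokes termination to force a decision and then excludes ABORT.
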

\begin{proof}
  Lemma~\ref{lem:02} shows that the only way to get something committed is to first write it in $\Psi$, while Lemma~\ref{lem:01} proves that either all or no correct process decide COMMIT on a sequence. If no correct process proposes ABORT and, by Theorem~\ref{the:ter}, they guarantee termination, then they must COMMIT. 
\end{proof}
\begin{theorem}
  Algorithm~\ref{alg:plcl} guarantees the ABORT-validity property.\\
\end{theorem}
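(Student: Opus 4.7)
The plan is to combine the structural guarantee of Lemma~\ref{lem:01} (no two conflicting decisions can be reached by correct processes) with the termination guarantee of Theorem~\ref{the:ter} and the unforgeability of Proofs-of-Fraud produced by Algorithm~\ref{alg:val}. The overall strategy is: a correct process only proposes ABORT by emitting a valid $tx_{abort}$, this $tx_{abort}$ becomes visible to every correct process via either $\Psi$ or $\Omega$, and no correct process can concurrently decide COMMIT on $seq$; since termination forces a decision, that decision must be ABORT.

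First, I would unfold what ``proposes ABORT on $seq$'' means in the protocol. A correct process $p_i$ only reaches line~\ref{line:vPoF} of Algorithm~\ref{alg:val} through the guard $\lit{coins\_spent}(tx_{plcl})$, meaning it has observed, inside $\Psi$, at least one block $b$ conflicting with a Bulk Close transaction $tx_{plcl}$ that enumerates $seq$. The resulting $tx_{abort}$ packages $b_p$ and the set $\ms{vPoF}$ of validators that signed both $tx_{plcl}$ and some conflicting $\Psi[k]$, $k \geq b_p$. Because $b$ lives in $\Psi$ and was validated by at least $m_0$ validators (Lemma~\ref{lem:mvv} and the $\Psi$-consensus threshold), $tx_{abort}$ is a legitimate, unforgeable Proof-of-Fraud against $seq$.

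Next, I would show that this proposal propagates. The abort function issues $\Gamma.\lit{send}(\Omega,tx_{abort})$ and $\Gamma.\lit{send}(\Psi,tx_{abort})$; by the semantics of $\lit{send}$ stated in Section~\ref{subsec:plcr}, each call returns only once the transaction (or a conflicting one) has been written in the corresponding chain. Because both $\Omega$ and $\Psi$ are unforkable Byzantine-fault-tolerant ledgers, every correct process eventually reads the same $tx_{abort}$ (or directly sees $b_p$ and the $\ms{vPoF}$ already inside $\Psi$), so every correct validator of $\Psi$ acquires the same evidence that made $p_i$ propose ABORT.

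Now I would invoke Lemma~\ref{lem:01}: given the existence of a valid Proof-of-Fraud for $seq$, the quorum-intersection argument in that lemma excludes the possibility that any correct process decides COMMIT on $seq$, since this would require two disjoint $m_0$-quorums inside a population containing at most $f < m_v/3$ Byzantine validators. Combining this with Theorem~\ref{the:ter}, every correct process terminates with either COMMIT or ABORT on $seq$; having ruled out COMMIT, the decision must be ABORT, which is the ABORT-validity property.

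The main obstacle I anticipate is the case where the Byzantine coalition attempts to bypass $\Psi$ entirely and push a Bulk Close straight into $\Omega$, so that different correct processes may first observe conflicting states in $\Omega$ versus $\Psi$. This is exactly what Lemma~\ref{lem:02} handles: any Bulk Close that ends up committed in $\Omega$ is first written in $\Psi$, so the abort-detection logic of Algorithm~\ref{alg:val} is guaranteed to fire on every correct $p_j \in P_\Psi$ that already holds the conflicting $\Psi$-block, closing the last loophole and completing the argument.
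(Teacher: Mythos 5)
Your proof is correct and follows essentially the same route as the paper's: the in-$\Psi$ case is discharged by Lemma~\ref{lem:01}, and the case of a Bulk Close pushed directly to $\Omega$ is handled via Lemma~\ref{lem:02} exactly as in the agreement proof (Theorem~\ref{the:agr}). You merely make explicit the propagation of $tx_{abort}$ and the appeal to termination (Theorem~\ref{the:ter}) that the paper's two-sentence proof leaves implicit.
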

\begin{proof}
  If a correct process proposes ABORT in $\Psi$, then by Lemma~\ref{lem:01} all correct processes decide ABORT. All correct processes in $\Psi$ also agree on an ABORT generated to a COMMIT outside of $\Psi$, as already shown in the proof of the agreement property (Theorem~\ref{the:agr}).
\end{proof}

\begin{theorem}
  Algorithm~\ref{alg:plcl} guarantees the COMMIT-Privacy/Lightness property.  
\end{theorem}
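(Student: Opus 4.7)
The plan is to trace the flow of information visible to an outside process $p \in P_\Omega \setminus P_\Psi$ and show that the only thing ever exposed to $p$ is the bulk close transaction, which encodes exactly the minimal transfer set.

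First I would isolate the channels by which such a $p$ can observe activity of the Platypus chain. Because $\Psi$ is a separate blockchain whose blocks are produced by $V_\Psi \subseteq P_\Psi$, and every multicast in Algorithm~\ref{alg:plcl} is addressed to $V_\Psi$, none of the internal messages nor any block content of $\Psi$ is delivered to $p$. Hence $p$ can only learn about $seq$ through data that is subsequently written to the parentchain $\Omega$.

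Next I would audit the writes that Algorithm~\ref{alg:plcl} performs on $\Omega$. In the COMMIT case assumed in the hypothesis, the only call that reaches $\Omega$ is line~\ref{line:par2}, which forwards $r.tx_{plcl}$; the ABORT branch at line~\ref{line:par3} (which would send a $tx_{Abort}$ carrying PoF blocks lifted from $\Psi$ via Algorithm~\ref{alg:val}) is never taken. By its construction in PHASE 1, $tx_{plcl} = \lit{createBulkCloseTx}(\mathds{C}_{P_\Psi})$ only encodes the time-$t_d$ ownership $\varphi(\text{\textcent}_i, t_d)$ of every coin $\text{\textcent}_i$ held in the chain. Together with the creation-time ownership $\varphi(\text{\textcent}_i, t_c)$ already recorded in $tx_{plcr}$ at time $t_c$, this furnishes exactly the pair $(\varphi(\text{\textcent}_i, t_c), \varphi(\text{\textcent}_i, t_d))$ for each coin, which by the definition in Section~\ref{sec:mod} is precisely the minimal transfer set $\ms{TR}^-$ of $seq$. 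No intermediate owner along $seq$ appears anywhere in $tx_{plcr}$ or $tx_{plcl}$.

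The main obstacle will be to rule out auxiliary side channels that might still leak intermediate transfer information. I would invoke the threshold-signature assumption of the model, which grants non-interactive aggregation (as in~\cite{boneh2018compact,gazi2019proof}): the signature attached to $tx_{plcl}$ discloses neither the identities of the individual signers nor any per-round history of $\Psi$. I would then observe that Algorithm~\ref{alg:val}---the only routine that ever copies $\Psi$-blocks into $\Omega$---is triggered exclusively when an invalid bulk close is observed, and is therefore inactive in the COMMIT case by Lemmas~\ref{lem:01} and~\ref{lem:02}. Combining these two facts with the previous paragraph, the information available to $p$ collapses to exactly $\ms{TR}^-$, establishing COMMIT-Privacy/Lightness.
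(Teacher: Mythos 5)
Your overall strategy is the same as the paper's: restrict attention to what an outside process $p\in P_\Omega\setminus P_\Psi$ can ever observe, note that this is limited to transactions written on $\Omega$, and argue that in the COMMIT case those are exactly the creation transaction (ownership at $t_c$) and the bulk close transaction (ownership at $t_d$), from which $p$ can reconstruct only the pairs $(\varphi(\text{\textcent}_i,t_c),\varphi(\text{\textcent}_i,t_d))$, i.e.\ the minimal transfer set $\ms{TR}^-$. That core argument matches the paper's proof.

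There is, however, one step that overreaches. You claim that Algorithm~\ref{alg:val} ``is therefore inactive in the COMMIT case by Lemmas~\ref{lem:01} and~\ref{lem:02}.'' Those lemmas are statements about a \emph{single} bulk close transaction: they say that for a given $tx_{plcl}$ either all correct processes decide ABORT or all decide COMMIT. They do not rule out an execution in which an earlier, invalid closing attempt (e.g.\ one injected by a Byzantine process claiming ownership of already-spent coins) is aborted, with its abort transaction---carrying Proof-of-Fraud material lifted from blocks of $\Psi$---written to $\Omega$, and only afterwards a valid bulk close commits. In such an execution the parentchain has learned intermediate ownership information about coins spent before the aborted attempt, so the final committed sequence $seq$ over $[t_c,t_d]$ does leak more than its minimal transfer set. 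The paper's proof explicitly confronts this: it first assumes W.l.o.g.\ that no prior abort transactions were written, and then handles the general case by re-basing $t_c$ to the time of the last abort, after which the same argument applies to the residual sequence. Your proof needs an analogous case split (or the same re-basing device) rather than the blanket assertion that the abort routine never fires when the eventual decision is COMMIT.
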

\begin{proof}
  First, we consider the case that a Platypus Bulk Close transaction was successfully written in the parentchain (i.e. a COMMIT). W.l.o.g. we assume this to be the second transaction (after the Platypus Creation transaction) to be written in the parentchain relating this Platypus chain $\Uppsi$, i.e. that no previous Abort transactions were written. Let $t_c$ the time when the Platypus chain was created, $t_d$ the time when the Platypus chain was closed.
  This Platypus Bulk Close transaction has been validated in the Platypus Blockchain $\Psi$, verifying all the operations were correct. The parentchain processes that are not in the Platypus chain have no knowledge of the Platypus chain other than its Platypus Creation transaction that was written in $\Omega$.
  Therefore, a Platypus Bulk Close transaction with enough signatures from validators, and valid signatures, seems correct from the point of view of $\Omega$. Therefore, only this information, along with the list of coins and owners, is provided to the parentchain. This means that parentchain validators only stored the list of owners and coins at $t_c$, and received a different list of owners and their coins at $t_d$. They can tell which coins changed ownership between $t_c$ and $t_d$, but they cannot tell if there were more owners in between. Thus, they can only see the minimal transfers set.

  Whereas the COMMIT-Privacy/Lightness property considers COMMITs, if Abort transactions took place in between $t_c$ and $t_d$, a few more operations might be revealed to the parentchain to prove invalidity in Abort transactions. However, changing $t_c$ to the time of the last Abort, the proof remains valid.
\end{proof}

\begin{theorem}[Correctness]
  The Platypus Protocol solves the Offchain problem.
\end{theorem}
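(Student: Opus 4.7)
The plan is to observe that this final theorem is essentially a bookkeeping statement: the Offchain problem is defined by the list of properties (Termination, Agreement, ABORT-Validity, COMMIT-Validity, and COMMIT-Privacy/Lightness) stated in Section~\ref{sec:mod}, and each of those properties has already been established individually for Algorithm~\ref{alg:plcl} in the preceding theorems. So I would structure the proof as a direct assembly: cite each property and point to the corresponding result that discharges it, with a brief remark explaining why nothing more is needed.

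Concretely, I would first note that a precondition for any of these properties to make sense is that the Platypus chain $\Psi$ actually gets bootstrapped correctly. This is handled by Theorem~\ref{the:plcrter} (creation terminates), Theorem~\ref{the:plcrexp} together with Corollary~\ref{cor:01} (no correct process is enrolled in $P_\Psi$ without consent), and Lemma~\ref{lem:mvv} (the created $\Psi$ indeed has the advertised $m_v$ validators and $m_0$ signatures). This gives us a well-formed Platypus chain on which to run Algorithm~\ref{alg:plcl}. I would then dispatch the four commit-related properties by citing Theorem~\ref{the:ter} for Termination, Theorem~\ref{the:agr} for Agreement, and the two subsequent theorems for COMMIT-Validity and ABORT-Validity, respectively, and finally the privacy theorem for COMMIT-Privacy/Lightness.

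The only conceptual subtlety to address, rather than a routine obstacle, is that the abort procedure (Algorithm~\ref{alg:val}) is tightly interleaved with the bulk close, so one has to make explicit that whenever the aforementioned theorems invoke ``a process proposes ABORT,'' this proposal is in fact producible by Algorithm~\ref{alg:val} whenever a conflicting block exists in $\Psi$; this is already implicit in the proofs of Lemma~\ref{lem:01} and Lemma~\ref{lem:02} via the Proof-of-Fraud construction at line~\ref{line:vPoF}. I would therefore include one short sentence remarking that Algorithms~\ref{alg:plcr}, \ref{alg:plcl}, and \ref{alg:val} jointly realize the offchain protocol $\Gamma$, so the properties proved for $\Gamma$ under this joint execution suffice.

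I do not expect a genuine hard step here; the theorem is a wrapper. If there is any friction, it is in being careful that the privacy argument is only phrased for executions that actually close (as the privacy theorem's proof takes care of by shifting $t_c$ to the last abort), and that the Offchain-problem statement does not require closure, only a COMMIT/ABORT decision, which is exactly what Termination (Theorem~\ref{the:ter}) delivers. Concluding, I would write a single-paragraph proof of the form: ``By Theorems~\ref{the:plcrter}--\ref{the:agr} and the subsequent three theorems, Algorithms~\ref{alg:plcr}, \ref{alg:plcl}, and \ref{alg:val} together satisfy Termination, Agreement, COMMIT-Validity, ABORT-Validity, and COMMIT-Privacy/Lightness; hence the Platypus protocol solves the Offchain problem as defined in Section~\ref{sec:mod}.''
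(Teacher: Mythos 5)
Your proposal matches the paper's own proof in both structure and substance: the paper likewise assembles the result by citing the bootstrapping guarantees (Theorem~\ref{the:plcrter}, Corollary~\ref{cor:01}, Lemma~\ref{lem:mvv}) and then the individual Termination, Agreement, ABORT-Validity, COMMIT-Validity, and COMMIT-Privacy/Lightness theorems for Algorithm~\ref{alg:plcl}. The only cosmetic difference is that the paper explicitly interposes a sentence noting that the inner consensus of $\Psi$ (with the same $m_v$ validators and threshold $m_0$, under $f<m_v/3$ and unforkability) holds between creation and closing, whereas you instead flag the interleaving with Algorithm~\ref{alg:val}; both are minor glue remarks and neither changes the argument.
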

\begin{proof}[Proof]
  The proofs for Algorithm~\ref{alg:plcr} guarantee that $m_v$ validators are requested at all times (Lemma~\ref{lem:mvv}), all of which explicitly stated to participate as validators (Corollary~\ref{cor:01}), with guaranteed termination if there are enough validators $m_v$ (Theorem~\ref{the:plcrter}), i.e. the Platypus chain is properly bootstrapped and the security assumptions remain at the end of Algorithm~\ref{alg:plcr}. Once this bootstrapping takes place, the inner consensus of $\Psi$ guarantees the consensus properties given the assumption $f<m_v/3$ and the unforkability property, with the same set $m_v$ of validators and using the same $m_0$ as threshold for Byzantine behaviour (e.g. DBFT). Finally, given this bootstrapping and consensus protocol, we show above that Algorithm~\ref{alg:plcl}, which closes the Platypus chain, guarantees termination, agreement, ABORT-validity, COMMIT-validity and COMMIT-lightness/privacy. Therefore, Platypus solves the Offchain problem.
\end{proof}
The following theorem shows that our construction works in the strongest coalition the Adversary can form.

\begin{theorem}[Resilience optimality]
  \vspace{-0.55em}
  \label{the:imp}
  It is impossible to perform a transfer operation in an Offchain protocol with partial synchrony if $f\geq m_v/3$.
\end{theorem}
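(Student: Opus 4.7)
The plan is to argue by contradiction using a partition/indistinguishability argument in the spirit of the classical lower bound of Dwork, Lynch and Stockmeyer~\cite{dwork1988consensus}. I would assume that some protocol $\Gamma$ solves the Offchain problem under partial synchrony while tolerating $f \geq \lceil m_v/3 \rceil$ Byzantine validators, and derive a violation of Agreement. Since $3f \geq m_v$, the validator set $V_\Psi$ can be split into three pairwise disjoint nonempty subsets $A$, $B$, $C$, each of size at most $f$, so that either $A$, $B$, or $C$ taken alone is a plausible Byzantine coalition.

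The core construction would use two ``clean'' executions that leverage the ability of the adversary to delay (but not drop) messages during the asynchronous period. In execution $E_1$ the processes of $A$ fail silently while $B\cup C$ is correct; some correct process proposes COMMIT on a sequence $seq_1$ whose effect is the transfer of a coin \textcent{} from its initial owner to $v$. By Termination and COMMIT-Validity, every correct process decides COMMIT on $seq_1$. In execution $E_2$ the processes of $C$ fail silently while $A\cup B$ is correct; some correct process proposes COMMIT on a conflicting sequence $seq_2$ that transfers the same coin \textcent{} to a different owner $w \neq v$. By the same properties, every correct process decides COMMIT on $seq_2$.

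Next I would build a hybrid execution $E_3$ in which $B$ is Byzantine while $A$ and $C$ are correct. The Byzantine processes in $B$ act toward $C$ exactly as $B$ did in $E_1$ and act toward $A$ exactly as $B$ did in $E_2$, and all messages between $A$ and $C$ are delayed past the decision time (admissible in the asynchronous prefix of partial synchrony). Then $C$'s view in $E_3$ is identical to its view in $E_1$, so $C$ decides COMMIT on $seq_1$, while $A$'s view is identical to its view in $E_2$, so $A$ decides COMMIT on $seq_2$. Since $seq_1 \neq seq_2$ and both $A$ and $C$ are correct in $E_3$, this contradicts the Agreement property established for $\Gamma$, completing the argument.

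The hard part will be making sure that $seq_1$ and $seq_2$ are genuinely admissible ``transfer'' sequences in their respective executions --- concretely, that a single coin \textcent{} can be the subject of two distinct COMMIT proposals whose Proofs-of-Fraud are unavailable to the correct parties in their respective runs. I would pick \textcent{} to be initially owned by a process in $B$ and design $seq_1, seq_2$ so that each looks legitimate on its side of the partition, ensuring that no correct process proposes ABORT in $E_1$ or $E_2$. A secondary point to verify is that the impersonation of $B$ in $E_3$ is consistent with the adversary model of Section~\ref{sec:adv}, namely that Byzantine processes are allowed to send inconsistent messages to different correct processes; this is standard under arbitrary (``Byzantine'') failures and so should cause no friction.
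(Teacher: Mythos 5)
Your proof is correct, but it takes a genuinely different route from the paper's. The paper's argument is quorum-based: it posits that any candidate protocol commits once some threshold $m_0$ of validator signatures is gathered, then shows (i) if $m_0 > 2m_v/3$ the adversary simply withholds its signatures and, since the remaining correct validators number at most $2m_v/3$, no transfer is ever performed (a liveness failure), and (ii) if $m_0 \leq 2m_v/3$, splitting the correct validators into two halves $Q_1, Q_2$ each of size below $m_v/3$ makes both $F\cup Q_1$ and $F\cup Q_2$ committing quorums, so the adversary double-validates two conflicting transactions spending the same coin (a safety failure). Your argument is instead the classical three-partition indistinguishability construction: you never assume the protocol has a threshold structure, and you use Termination plus COMMIT-Validity to force the commits in $E_1$ and $E_2$ rather than needing a separate liveness branch. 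What each buys: the paper's version is shorter and directly motivates the concrete choice $m_0=\lfloor 2m_v/3\rfloor+1$ used in Algorithm~\ref{alg:plcr}, but it implicitly restricts attention to signature-threshold protocols, which weakens it as a lower bound over \emph{all} protocols; your version is protocol-agnostic and hence the stronger impossibility statement, at the cost of the admissibility obligations you already flag --- that the $A$--$C$ delays are legal before the (unknown) stabilization time, that the conflicting sequences each look fraud-free on their own side so no correct process proposes ABORT, and that $B$'s two-faced behaviour is within the adversary model. One small point worth making explicit when you write it up: the contradiction you reach is between two \emph{different} correct processes deciding COMMIT on different sequences, so you should read the paper's Agreement property in its intended cross-process sense (as the paper itself does in Lemma~\ref{lem:01}), and you should note that the edge cases $m_v<3$ (where three nonempty parts do not exist) are degenerate because $f\geq m_v/3$ then already corrupts a majority-sized coalition.
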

\begin{proof}
  We prove this by contradiction. If $f>n_v/3$ then the Adversary can corrupt the Blockchain, and thus the Offchain protocol is not correct. Thus, suppose $f\geq m_v/3$ while still $f<n_v/3$. Let there be at least one coin \textcent~is transferred from account $a$ to account $b$ in transaction $tx$ in the Offchain protocol $\Gamma$ (i.e. not the trivial case of closing after opening). We assume that there exists a correct Offchain protocol that solves the Offchain problem with such an Adversary. We look at the amount of validators the Blockchain protocol requires for the protocol to COMMIT, $m_0$. If the protocol had a threshold of $m_0> 2m_v/3$ signatures, it follows that some of the processes controlled by the Adversary should have agreed to such transaction. But the Adversary may decide not to validate, and thus the offchain protocol cannot continue, not performing any offchain transfer.

  Thus, $m_0$ has to be such that $m_0\leq 2m_v/3$. In such a case, consider a partition of validators $V_\Psi$ into $Q_1,\,Q_2$ of correct processes, and $F$ the set of processes controlled by the Adversary, such that $Q_1\cap Q_2=Q_1\cap F=F\cap Q_2=\emptyset$. Suppose there is another transaction $tx'$ that transfers the same coin \textcent~from account $a$ to $c,\; c\neq b$. Suppose $|Q_1|=|Q_2|=\frac{|V_\Psi|-|F|}{2}<m_v/3$. Since $Q_1\cup Q_2 \cup F = V_\Psi$, we have that $|Q_1|+|Q_2|+|F|=m_v$, meaning that $|Q_1|<m_v/3$ and $|Q_2|<m_v/3$. Therefore, $|F|+|Q_1|>2m_v/3$ and $|F|+|Q_2|>2m_v/3$. In this case, if $m_0<|F|+|Q_2|$, then $m_0<|F|+|Q_1|$ and thus it would be possible for the Adversary to validate $tx$ for $Q_1$ and $tx'$ for $Q_2$. Thus, $m_0$ must be such that $m_0>|F|+|Q_2|>2m_v/3$. This is a contradiction: we already showed above that $m_0$ should be such that $m_0\leq 2m_v/3$.
\end{proof}

\section{Theoretical Analysis}\label{sec:analysis}
In this section, we analyze the communication, message and time complexity of the Platypus protocol, ignoring the complexity of the underlying blockchain. 
We consider the calls to $\lit{acsend(\Psi,tx)}$ and $\lit{send(\{\Psi,\Omega\},tx)}$ to have the same complexities as one multicast to all validators $V_{\{\Psi,\Omega\}}$ of the blockchain that receives the transaction $tx$.
\subsubsection{Message complexity} The message complexity of Algorithms~\ref{alg:plcl},~\ref{alg:spin} and~\ref{alg:spou} is $\mathcal{O}(m_v^2)$ and that of Algorithm~\ref{alg:val} is $\mathcal{O}(m_p*m_v)$. We conjecture that the complexity could however be reduced to $\mathcal{O}(m_v)$ at some points, leveraging non-interactive aggregation of the validators signatures and messages, but certain calls to $\lit{acsend(...)}$ and $\lit{send(...)}$ would still have a complexity of $\mathcal{O}(m_v^2)$, as they can be executed by all processes. The same applies to Algorithm~\ref{alg:plcr}, with the exception that Phase 2 has a message complexity of $\mathcal{O}(m_p*n_p)$, thus being this one the complexity of Platypus.
\subsubsection{Communication complexity} The message size is $\mathcal{O}(m_p)$ 
 in lines~\ref{ref:plcrcont} and~\ref{line:comcop} of Algorithm~\ref{alg:plcr}, leading to a communication complexity of $O(\max\{m_p*n_p,\, m_v^3\})$, because of phases 2 and 3 of the algorithm. Line~\ref{line:vPoF} of Algorithm~\ref{alg:val} also has a message size of $\mathcal{O}(m_v)$, leading to a communication complexity of $O(m_p*m_v^2)$, although the set of validators can be removed if no punishments are considered. The rest of messages have constant size in all algorithms, thus their communication complexity is the same as their message complexity.
\subsubsection{Time complexity} The time complexity is $\mathcal{O}(m_v)$ due to phases 4 of Algorithm~\ref{alg:plcr}, and, Phase 2 of Algorithm~\ref{alg:plcl}. Algorithms~\ref{alg:val},~\ref{alg:spin} and~\ref{alg:spou} have constant time complexity. Again, we conjecture that, leveraging non-interactive aggregation, the time complexity can be reduced to constant time.

Note that these complexities are lower or comparable to consensus algorithms in the same model~\cite{crain2018dbft}.



\section{Improvements \& Discussion}
\label{sec:dis}
In this section, we consider additional features of the Platypus chain, and its usage for the general sidechains problem, which we also define.

\subsection{Crosschain payments}
A crosschain payment can be of two types, either a payment to a parentchain, or a payment through a parentchain to another childchain. With the above-shown protocol, a payment to a parentchain
would require a Platypus bulk close transaction, and a new Platypus creation transaction. We describe an extension of the protocol to perform payments without closing and reopening Platypus chains.
\subsubsection{Users' Splice-in \& Splice-outs}
Splice-in and Splice-out transactions allow users to get their funds into and out of the Platypus chain, respectively. 

  \textit{$\cdot$ Splice in.}
  Splicing in allows users to join a Platypus chain. Since this transaction takes place after the Platypus chain has been created, it requires some validation by both their sets of validators. Algorithm~\ref{alg:spin} shows the Splice in protocol for a process $p_i$ that wants to join $\Psi$. A splice in transaction $tx_{spin}$ must be written in both $\Omega$ and $\Psi$, after which the funds can only be spent in $\Psi$. 

  

  \begin{algorithm}[H]
    \small
    \caption{Splice in algorithm for process $p_i$}
    \label{alg:spin}
    \begin{algorithmic}[1]
      \Statex $\vartriangleright$ State of the algorithm
      \Statex $\Omega$, $\Psi$, $\Gamma$, the Blockchain, Platypus Blockchain and protocol
      \Statex $\mathds{C}_i$, coins that belong to process $p_i$
      \Statex $plid$, the Platypus chain identifier
      \Statex $tx_{spin}\gets\bot$, the splice in transaction
      \Statex \rule{0.4\textwidth}{0.4pt}
      \Statex $\vartriangleright$ $p_i$ creates and waits for transaction to write
        \State $tx_{spin} \; \gets \lit{createSpliceInTx(}\mathds{C}_i,plid\lit{)}$
        \State $tx_{spin} \; \gets \lit{sign_i(}tx_{spin}\lit{)}$
        \State $\Gamma\lit{.send(}\Omega, tx_{spin} \lit{)}$
        \State $\Gamma\lit{.send(} \Psi, tx_{spin} \lit{)}$
  \end{algorithmic}
\end{algorithm}
  \textit{$\cdot$ Splice out.}
  The same way users can splice into an existing Platypus chain, they can get their funds back in the parentchain. Again, this is a sensible operation that requires proper synchronization between both Platypus chain and parentchain so as to protect against fraud.

  
The splice out transaction allows processes to leave a Platypus chain before it is closed, retrieving their funds back in the parentchain. In this case, we require first the transaction to be finalized in $\Psi$ before being considered for the parentchain. 
  Algorithm~\ref{alg:spou} shows the splice out protocol for a process $p_i$. This protocol is rather a simplification of Algorithm~\ref{alg:plcl}. It creates and tries to write a splice out transaction $tx_{spou}$, that can be aborted with an abort transaction $tx_{abort}$.  

  \begin{algorithm}[H]
    \small
    \caption{splice out for process $p_i$}
    \label{alg:spou}
    \begin{algorithmic}[1]
      \Statex  $\vartriangleright$ State of the algorithm
      \Statex $\Omega$, $\Psi$, $\Gamma$, the Blockchain, Platypus Blockchain and protocol
      
      \Statex $\mathds{C}_i$, the coins that belong to process $p_i$
      \Statex $tx_{spou}\gets \bot$, the splice out transaction
      \Statex \rule{0.4\textwidth}{0.4pt}
      \Statex $\vartriangleright$ $p_i$ creates and waits for transaction to write in $\Psi$
      
      \State $tx_{spou} \; \gets \lit{createSpliceOutTx(}\mathds{C}_i\lit{)}$
      \State $tx_{spou} \; \gets \lit{sign_i(}tx_{spou}\lit{)}$

      \State $r\gets\Gamma\lit{.acsend(}\Psi,tx_{spou}\lit{)}$ \Comment{Get back $tx_{spou}$ or $tx_{abort}$} \label{lin:cor}
      \SmallIf{$r.type=\lit{ABORT}$}{ $\Gamma.\lit{send(}\Omega,r.tx_{abort}\lit{)}$}\label{line:par4}
      \EndSmallIf
      \SmallElseIf{$r.type=\lit{COMMIT}$}{$\Gamma.\lit{send(}\Omega,r.tx_{spou}\lit{)}$} \label{line:spoucommit}
\label{line:par5}
      \EndSmallElseIf
  \end{algorithmic}
\end{algorithm}
\subsubsection{Validators' Splice-in \& Splice-outs}

It is important to consider that the adversary should not gain enough relative power, either by splicing in or by correct validators splicing out. One way to guarantee this is by keeping the set of validators intact regardless of the funds each validator has after Platypus creation. This approach is similar to the Platypus bulk close transaction, and ensures correctness of the protocol, although it can be cumbersome for a validator to keep track and participate in a Platypus chain it no longer takes active part in. For this reason, an additional feature of the protocol might provide explicit delegation of the validator set to other users, similar to how consortium blockchains behave.


Another alternative may allow users and the set of validators to splice in and splice out in a permissionless, Proof-of-Stake based environment. In this set, validators should take great care at identifying the probability of an adversary gaining enough relative power, either through simple heuristics based on the funds at stake, or additional information, such as trust in other validators. If the probability of an adversary gaining enough relative power reaches a certain threat threshold, either by the validators set reducing significantly or any other information used for heuristics, validators can generate a Platypus bulk close transaction and safeguard all users' funds. This variation requires the assumption that the adversary never gains enough relative stake such that $stake(f)\geq stake(m_v)/3$.

\subsubsection{Crosschain payments with splice-in \& splice-outs}
\label{sec:cspio}
A crosschain payment in between two blockchains with Platypus is a payment of one user from/into an existing Platypus chain to/from its parentchain, or in between two Platypus chains that share a common parentchain. In section~\ref{sec:extcross}, we generalize such definition. Regardless of the particular conditions and assumptions for splice-ins and splice-outs, we illustrate in this section how these transactions would work.

$\cdot$ \textit{Crosschain payment from/to parentchain}. This case is trivial using Algorithm~\ref{alg:spin} or~\ref{alg:spou}, respectively. 

$\cdot$ \textit{Crosschain payment between childchains}. A crosschain payment between Platypus chains is performed with a splice out into the common parentchain, followed by a splice in into the recipient.

\subsection{Session Keys}
In the Platypus chain $\Uppsi$, we specify the requirement of one time accounts. This is to prevent a variant of the ABA problem, in which at time $t_1$ A transferred the coin to B, which in turn transferred it back to $A$ at time $t_2$. If $A$ tries to Platypus Bulk Close claiming ownership at time $t_3$, $B$ could ABORT with a valid proof of a spent from time $t_2$. While using one time accounts already solves this problem, since $A$ would actually use $A_1$ for $t_1$ and $A_2$ for $t_2$ and $t_3$, the approach can also be solved while allowing $A_1=A_2$ by introducing further data in Platypus Bulk Close transactions, such as the merkle tree of the state of the Platypus Blockchain. 

\subsection{Platypus for Sidechains}
\label{sec:extcross}
The childchain definition done in section~\ref{sec:mod} can easily be generalized for sidechains by clearly decoupling $\Psi$ from the protocol, and stating different sets for them $P\neq Q$ instead of $P \varsubsetneq Q$. We define sidechain protocols as a superset of offchain protocols, defined in Section~\ref{sec:mod}. A sidechain protocol allows to perform a payment across blockchains, i.e. a \textit{crosschain payment}. If two or more blockchains intend to perform crosschain payments, we refer to them as being sidechains. They may or may not be in a parent-child hierarchy.

$\cdot$ \textit{Sidechain protocol.}  
Given two blockchains, $\Omega$ of $P$ processes and $\Psi$ of $Q$, $P\neq Q$ a sidechain protocol $\Pi$ is an offchain protocol that enables transfers in between all accounts $p_a,q_a$ such that {\small $\rho(q_a)=q \in Q, \rho(p_a)=p\in P$}. To reflect $\Omega$ and $\Psi$ being independent, and this possibility of transferring, we define the following property:

$\>-$ COMMIT-Matching Knowledge: If a correct process decides COMMIT on a sequence $seq$ of transfer operations in $\Pi$ between $\Omega$ and $\Psi$, then $\forall p\in P,\; p$ knows a subset $seq_1$ and $\forall q\in Q,\; q$  knows a subset $seq_2$, such that $seq_1$ and $seq_2$ are two minimal transfer sets, $seq_2\cap seq_1=\emptyset$, and it exists one surjective application $f:seq_1\times seq_2\rightarrow TR^-(seq_1\cup seq_2)\cup\{0\}$ defined as follows:
    \begin{equation}
      \hspace{-4.7em}
      \small
      f(a\,TR\,b,c\,TR\,d)=  \left \{\begin{aligned}(a\,TR\,d) &\;\text{if}\; \rho(b)=\rho(c)=p_i\\ 0\; &\,\text{otherwise} \end{aligned}\right \}
    \end{equation}
 Also, since the coins are different in different blockchains, we identify coins by their value when calculating the minimal transfer set $TR^-$. Intuitively, for a transaction in $seq_1$ exists a transaction in $seq_2$ such that both are transitive (that is, the receiver of one is the sender of the other). If that was not to happen, then some of the transactions in $seq_1$, or in $seq_2$, would have nothing to do with a payment in between two sidechains.

  If $Q\varsubsetneq P$ then $seq_1$ is just a set of idempotent transfers of the form $a\, TR\, b$, with $\rho(a)=\rho(b)$, since all $p\in Q$ are also in $P$, and thus COMMIT-Privacy/Lightness is a particular case scenario of the COMMIT-Matching Knowledge property. 

  Similarly, if $P\not\supseteq Q,\,P\not\subseteq Q$ then $\Omega$ and $\Psi$ are not in the parent-child chain hierarchy. 

  $\cdot$ \textit{Crosschain payments.} This is solved by our protocol if both sidechains have a common parentchain, as shown in section~\ref{sec:cspio}. In general, for a crosschain payment between two unrelated Blockchains $\Omega_1$ and $\Omega_2$, with sets of validators $V_{\Omega_1}$ and $V_{\Omega_2}$, they can perform the payment manufacturing an additional Blockchain $\Phi$:

  $\>-$ Create a common parentchain $\Phi$ with $V_\Phi \supseteq V_{\Omega_1}\cup V_{\Omega_2}$, extend both their Blockchains to adopt the Platypus protocol, and perform the payment as explained in section~\ref{sec:cspio}. In this case, if the adversary tries to double spend the crosschain payment in $\Omega_2$, or in $\Omega_1$, then, as long as $f<|V_\Phi|/3$, the funds will remain in the parentchain $\Phi$.

  $\>-$ Create a common Platypus chain $\Phi$, with $V_\Phi\subseteq V_{\Omega_1}\cap V_{\Omega_1}$, and perform the payment. In such a case, should $f<|V_{\Omega_1}|$ and $f<|V_{\Omega_2}|$, then the adversary could not double spend the funds in $\Phi$ and splice out to both $\Omega_1$ and $\Omega_2$.
  \subsection{Attacks}
  Many of the common attacks for synchronous offchain protocols are not applicable in the partially synchronous Platypus, as they exploit timelocks, such as forced expiration spam~\cite{poon2016bitcoin}, balance disclosure attacks~\cite{2019difficulty} or stale attacks~\cite{2019scalable}, among others. The colluding validators attack is possible, should the adversary be such that $f\geq m_v/3$, as shown by Theorem~\ref{the:imp}.

  However, it is still possible to perform a colluding validators attack in this protocol, if the Adversary gains enough power to influence the parentchain or the childchain. Again, this can only happen if $f\geq m_v/3$ or $f\geq n_v/3$, and since the offchain protocol we propose composes a partially synchronous Blockchain, the impossibility result shown in theorem~\ref{the:imp} already shows this is the best possible case for a partially synchronous offchain protocol.

  We also introduce the \textit{ABA-transfer} attack. If a coin \textcent~was transferred from process $p$ to process $q$, and later on again to process $p$, $q$ can try to ABORT any close/splice out in which \textcent~does not belong to him, by using as proof the deprecated transfer $p \,TR\,p$. To cope with this attack, we use session keys in this document, as mentioned in Section~\ref{sec:mod}, thus having two different accounts. Another possible solution involves committing to merkle trees and requiring any ABORT to provide a merkle tree $T$ such that the merkle tree $T'$ of the COMMIT attempt is included $T'\subseteq T$ as part of the Proof-of-Fraud.

  \subsection{Accountability: Punishment through Abort}
It is easy to see that, if more than a third of the validators are selfish, instead of correct, they can collude to an attack. Validators may collude to create Platypus Creation transactions with other users, and then Platypus Bulk Close, claiming all the funds (effectively stealing those funds). This is not at conflict with this current model, in which there are correct or Byzantine processes, such that $f<m_v/3$.

Other selfish users may try to fork the Platypus Blockchain $\Psi$, and perform a double spend. Accountability plays a major role in attacks of this type, as shown by~\cite{cryptoeprint:2019:587}. We are currently developing and extension of this protocol in the rational model.

Intuitively, since the abort function gathers Proofs-of-Fraud (PoFs) for each validator that was a fraudster, the protocol can create disincentives for Byzantine behaviour through punishments when Abort transactions with PoFs are written in the parentchain. This disincentivizes selfish actors from colluding to an invalid state, whereas it provides the same guarantees and correctness in the Byzantine Fault Tolerant model. The rational model requires however further modifications and assumptions. 
\subsubsection{Creation and Destruction with Accountability}
Apart from disincentives, the Platypus protocol can modify creation and closing to make it harder, or even impossible (in the case of creation), for selfish users to collude to attack the protocol in these steps. For example, by requiring every single signature at creation and destruction, other users can neither lock coins nor choose which coins they claim when destroying the childchain.
The reader can spot that this requirement might come at the cost of non-termination: one single validator or user not signing can lock the protocol. To cope with this, creation and destruction can be divided into multiple transactions, each of which require at least $m_0$ validators, and all signatures of everybody who is joining/exiting the Platypus chain. Again, this is a work in progress for the rational model, and further modifications are required to guarantee correctness in this environment.


\section{Related Work}
\label{sec:rel}


\subsubsection{Sidechains \& Childchains}
Childchains were first introduced with the concept of sidechains~\cite{back2014enabling}. A sidechain has a broader definition than a childchain 
has been used to execute crosschain payments 
without a parent-child hierarchical structure.
%
Childchains were first formalized in~\cite{gazi2019proof} where the  
authors propose 
an efficient childchain protocol in a semi-synchronous model.
Unfortunately, their notion of semi-synchronous communication considers that every messages get delivered in a non-null bounded amount of time $\Delta$, which remains a synchrony assumption~\cite{dwork1988consensus}. The term `semi' is used by the authors to denote the fact that the bound $\Delta$ is not null. Note that this notion differs from partial synchrony~\cite{dwork1988consensus} where the bound is unknown.

\subsubsection{Crosschain payments}
Many protocols propose generic crosschain payments. 
Atomic crosschain swaps~\cite{nolan,herlihy2018atomic,zamyatinxclaim,zakhary2019atomic} typically rely on Hashed Timelock Contracts~\cite{HTLCs} that are synchronous, while others focus on a crash failure model, rather than a Byzantine one~\cite{zakhary2019atomic}. 
Chain relays bridge information from one Blockchain by writing it in a different Blockchain, such as an Ethereum smart contract storing Bitcoin headers~\cite{chow2016btc}. 
Consensus-based crosschain interactions~\cite{wood2016polkadot,kwon2014tendermint} are the closest to our proposal, with some of them falling in the sidechain category. Polkadot~\cite{wood2016polkadot} reuses the idea to manufacture a common parentchain to Blockchains, in order to perform payments asynchronously, although 
it was not proved correct. 

Crosschain deals~\cite{herlihy2019cross} allow for auctions or relaying payments.
The authors outline both a synchronous and a partially synchronous protocols.
Unlike our problem, the crosschain deals problem tolerates that the protocol aborts even if the only processes proposing abort are Byzantine. Our problem disallows such an execution as it could prevent a correct process from cashing out.
Our implementation ensures this execution cannot happen by requiring every abort transaction to contain a valid proof of fraud. 
This makes our offchain problem the first Byzantine fault tolerant variant of the atomic commit problem~\cite{cachin2011introduction} that has only been defined to our knowledge in a crash model.

\subsubsection{Offchain protocols}
State and payment channels~\cite{poon2016bitcoin,decker2015fast} were the first offchain proposals for Blockchains. The Lightning Network~\cite{poon2016bitcoin}, a network of channels that relay payments offchain, is the most notable of the offchain proposals, while other similar offchain payment networks have been proposed\cite{khalilnocust,miller2017sprites}, some of which work in asynchronous communications~\cite{avarikioti2019brick,lind2018teechain}. While channel networks scale, they are still limited to the amount of transactions allowed to open and close each of its channels. Lightning Factories allow users to open several channels at once while preserving constant lock-in time~\cite{2019scalable}. Other works also target more scalability than payment channels through factory-like constructions under different systems and assumptions~\cite{burchert2018scalable}. All factories and channels require of all involved users to explicitly sign to perform transfers, impacting performance. PLASMA is the most known childchain construction, proposed for Ethereum~\cite{poon2017plasma}. 
It provides the first childchain protocol with fraud detection. 


All these offchain protocols are synchronous, which could make them vulnerable to the
Balance Disclosure attack~\cite{2019difficulty} 
that
discloses the balances of other users in the Lightning Network, while the Stale Channel/Factory attack locks balances of all users in a channel/factory~\cite{2019scalable}. 
There is therefore great interest in achieving offchain scalability and privacy in partial synchrony.



\section{Conclusion}
\label{sec:conc}
The Platypus chain is the first childchain that does not assume synchrony or a trusted execution environment.
We prove its correctness, and discuss its extensions and applications for scalability and for secure crosschain payments. 
Finally, we showed that our protocol is correct and resilience optimal.
%
As future work, we would like to cope with more than $n/3$ rational processes. 

{\small
\bibliographystyle{abbrv}
\bibliography{main}

\begin{thebibliography}{10}

\bibitem{avarikioti2019brick}
G.~Avarikioti, E.~K. Kogias, and R.~Wattenhofer.
\newblock Brick: Asynchronous state channels.
\newblock Technical Report 1905.11360, arXiv, 2018.

\bibitem{back2014enabling}
A.~Back, M.~Corallo, L.~Dashjr, M.~Friedenbach, G.~Maxwell, A.~Miller,
  A.~Poelstra, J.~Tim{\'o}n, and P.~Wuille.
\newblock Enabling blockchain innovations with pegged sidechains.
\newblock {\em URL: http://www. opensciencereview.
  com/papers/123/enablingblockchain-innovations-with-pegged-sidechains},
  page~72, 2014.

\bibitem{boneh2018compact}
D.~Boneh, M.~Drijvers, and G.~Neven.
\newblock Compact multi-signatures for smaller blockchains.
\newblock In {\em Int'l Conf. on the Theory and Application of Cryptology and
  Information Security}, pages 435--464, 2018.

\bibitem{burchert2018scalable}
C.~Burchert, C.~Decker, and R.~Wattenhofer.
\newblock Scalable funding of bitcoin micropayment channel networks.
\newblock {\em Royal Society open science}, 5(8):180089, 2018.

\bibitem{cachin2011introduction}
C.~Cachin, R.~Guerraoui, and L.~Rodrigues.
\newblock {\em Introduction to reliable and secure distributed programming}.
\newblock Springer Science \& Business Media, 2011.

\bibitem{chow2016btc}
J.~Chow.
\newblock Btc relay.
\newblock http://btcrelay.org/, 2016.

\bibitem{cryptoeprint:2019:587}
P.~Civit, S.~Gilbert, and V.~Gramoli.
\newblock Polygraph: Accountable byzantine agreement.
\newblock Cryptology ePrint Archive, Report 2019/587, 2019.
\newblock \url{https://eprint.iacr.org/2019/587}.

\bibitem{crain2018dbft}
T.~Crain, V.~Gramoli, M.~Larrea, and M.~Raynal.
\newblock {DBFT}: Efficient leaderless byzantine consensus and its application
  to blockchains.
\newblock In {\em NCA'18}, pages 1--8. IEEE, 2018.

\bibitem{CNG18}
T.~Crain, C.~Natoli, and V.~Gramoli.
\newblock Evaluating the red belly blockchain.
\newblock Technical Report 1812.11747, arXiv, Dec. 2018.

\bibitem{decker2015fast}
C.~Decker and R.~Wattenhofer.
\newblock A fast and scalable payment network with bitcoin duplex micropayment
  channels.
\newblock In {\em Symposium on Self-Stabilizing Systems}, pages 3--18.
  Springer, 2015.

\bibitem{dwork1988consensus}
C.~Dwork, N.~Lynch, and L.~Stockmeyer.
\newblock Consensus in the presence of partial synchrony.
\newblock {\em Journal of the ACM (JACM)}, 35(2):288--323, 1988.

\bibitem{EGJ18}
P.~Ekparinya, V.~Gramoli, and G.~Jourjon.
\newblock Impact of man-in-the-middle attacks on ethereum.
\newblock In {\em 37th {IEEE} Symposium on Reliable Distributed Systems
  (SRDS)}, pages 11--20, 2018.

\bibitem{gazi2019proof}
P.~Gazi, A.~Kiayias, and D.~Zindros.
\newblock Proof-of-stake sidechains.
\newblock In {\em IEEE Symposium on Security \& Privacy}, 2019.

\bibitem{gurcan2018cancellation}
{\"O}.~G{\"u}rcan, A.~Ranchal-Pedrosa, and S.~Tucci-Piergiovanni.
\newblock On cancellation of transactions in bitcoin-like blockchains.
\newblock In {\em OTM Confederated International Conferences" On the Move to
  Meaningful Internet Systems"}, pages 516--533. Springer, 2018.

\bibitem{herlihy2018atomic}
M.~Herlihy.
\newblock Atomic cross-chain swaps.
\newblock In {\em Proceedings of the 2018 ACM Symposium on Principles of
  Distributed Computing}, pages 245--254. ACM, 2018.

\bibitem{herlihy2019cross}
M.~Herlihy, B.~Liskov, and L.~Shrira.
\newblock Cross-chain deals and adversarial commerce.
\newblock Technical Report 1905.09743, arXiv, 2019.

\bibitem{2019difficulty}
J.~Herrera-Joancomart{\'\i}, G.~Navarro-Arribas, A.~Ranchal-Pedrosa,
  C.~P{\'e}rez-Sol{\`a}, and J.~Garcia-Alfaro.
\newblock On the difficulty of hiding the balance of lightning network
  channels.
\newblock Technical Report 2019.328, Cryptology ePrint Archive, 2019.

\bibitem{khalilnocust}
R.~Khalil, A.~Gervais, and G.~Felley.
\newblock {NOCUST}--a securely scalable commit-chain.
\newblock Technical Report 642, Cryptology ePrint Archive, 2018.

\bibitem{kwon2014tendermint}
J.~Kwon.
\newblock Tendermint: Consensus without mining.
\newblock {\em Draft v. 0.6, fall}, 2014.

\bibitem{lind2018teechain}
J.~Lind, O.~Naor, I.~Eyal, F.~Kelbert, P.~Pietzuch, and E.~G. Sirer.
\newblock Teechain: Reducing storage costs on the blockchain with offline
  payment channels.
\newblock In {\em Proceedings of the 11th ACM International Systems and Storage
  Conference}, pages 125--125. ACM, 2018.

\bibitem{miller2017sprites}
A.~Miller, I.~Bentov, R.~Kumaresan, and P.~McCorry.
\newblock Sprites: Payment channels that go faster than lightning.
\newblock Technical Report 1702.05812, arXiv, 2017.

\bibitem{nolan}
T.~Nolan.
\newblock Atomic swaps using cut and choose.
\newblock https://bitcointalk.org/index.php?topic=1364951, 2016.

\bibitem{poon2017plasma}
J.~Poon and V.~Buterin.
\newblock Plasma: Scalable autonomous smart contracts.
\newblock {\em White paper}, pages 1--47, 2017.

\bibitem{poon2016bitcoin}
J.~Poon and T.~Dryja.
\newblock The bitcoin lightning network: Scalable off-chain instant payments,
  2016.

\bibitem{2019scalable}
A.~Ranchal-Pedrosa, M.~Potop-Butucaru, and S.~Tucci-Piergiovanni.
\newblock Scalable lightning factories for bitcoin.
\newblock In {\em Proc. of the 34th ACM/SIGAPP Symp. on Applied Computing},
  pages 302--309. ACM, 2019.

\bibitem{HTLCs}
R.~Russell.
\newblock Lightning networks part ii: Hashed timelock contracts (htlcs).
\newblock \url{https://rusty.ozlabs.org/?p=462}.

\bibitem{wood2016polkadot}
G.~Wood.
\newblock Polkadot: Vision for a heterogeneous multi-chain framework.
\newblock {\em White Paper}, 2016.

\bibitem{zakhary2019atomic}
V.~Zakhary, D.~Agrawal, and A.~E. Abbadi.
\newblock Atomic commitment across blockchains.
\newblock Technical Report 1905.02847, arXiv, 2019.

\bibitem{zamyatinxclaim}
A.~Zamyatin, D.~Harz, J.~Lind, P.~Panayiotou, A.~Gervais, and W.~J.
  Knottenbelt.
\newblock Xclaim: Interoperability with cryptocurrency-backed tokens.
\newblock Technical Report 2018/643, Cryptology ePrint Archive, 2018.

\end{thebibliography}
}



\end{document}